\newtheorem{observation}{Observation}
\title{Improved Spanning on Theta-5}
\author{%
	Prosenjit Bose\footnotemark[1]\,\,\footnotemark[2]
\and
	Darryl Hill\footnotemark[3]\,\,\footnotemark[2]
\and
	Aurélien Ooms\footnotemark[4]\,\,\footnotemark[5]
}
\begin{document}
\renewcommand{\thefootnote}{\fnsymbol{footnote}}
\maketitle
\footnotetext[1]{Carleton University, Ottawa, Canada. Email: {\tt jit@scs.carleton.ca}}
\footnotetext[2]{Research supported in part by NSERC}
\footnotetext[3]{Carleton University, Ottawa, Canada. Email: {\tt darrylhill@cunet.carleton.ca}}
\footnotetext[4]{Email: {\tt aurelien.ooms@gmail.com}}
\footnotetext[5]{%
Supported by the VILLUM Foundation grant 16582.
Part of this research was accomplished while the author was a PhD
student at ULB under FRIA Grant 5203818F (FNRS).%
}

\begin{abstract}
	We show an upper bound of \( \frac{
			\sin\left(\frac{3\pi}{10}\right)
		}{
			\sin\left(\frac{2\pi}{5}\right)-\sin\left(\frac{3\pi}{10}\right)
		}
		<5.70\) on the spanning ratio of
	$\Theta_5$-graphs,
	improving on the previous best known upper bound of \(9.96\)
	[Bose, Morin, van Renssen, and Verdonschot.
	The Theta-5-graph is a spanner.
	\emph{Computational Geometry}, 2015.%
	]
	\keywords{Theta Graphs \and Spanning Ratio  \and Stretch Factor \and Geometric Spanners.}
	
\end{abstract}

\section{Introduction}

A geometric graph $G$ is a graph whose vertex set is a set of points $P$ in the
plane, and where the weight of an edge $uv$ is equal to the Euclidean distance
$|uv|$ between $u$ and $v$.
Informally, a $\Theta_k$-graph
is a geometric graph built by dividing the area around each point of
\(v \in P\)
into \(k\) equal angled cones, connecting \(v\) to the \emph{closest} neighbor in each
cone (we shall define closest later).
Such graphs arise naturally in settings like wireless
networks, where signals to anyone but your nearest neighbor may be
drowned out by interference. Moreover, the fact that
signal strength fades quadratically with distance, and
thus that power requirements are proportional to the square of the distance the
signal has to travel, makes many small hops economically superior to one
large hop, even if the sum of the distances is larger.
The \emph{spanning ratio} (sometimes called the \emph{stretch factor})
of a geometric graph \(G\) is the maximum
over all pairs \(u,v \in P\) of the ratio between the length of the shortest
path from \(u\) to \(v\) in \(G\) and the Euclidean distance from \(u\) to
\(v\).
Using simple geometric observations and techniques,
we give a new analysis of the spanning ratio of \(\Theta_5\)-graphs, bringing
down the best known upper bound from
$9.96$~\cite{sander5} to $5.70$.
\state{TheoremMain}{theorem}{\label{main}
	Given a set \(P\) of points in the plane,
	the \(\Theta_5\)-graph of \(P\) is a \(5.70\)-spanner.
}

\begin{figure}
	\centering
	\begin{subfigure}[b]{0.45\textwidth}
		\centering
		\includegraphics[page=8,width=\textwidth]{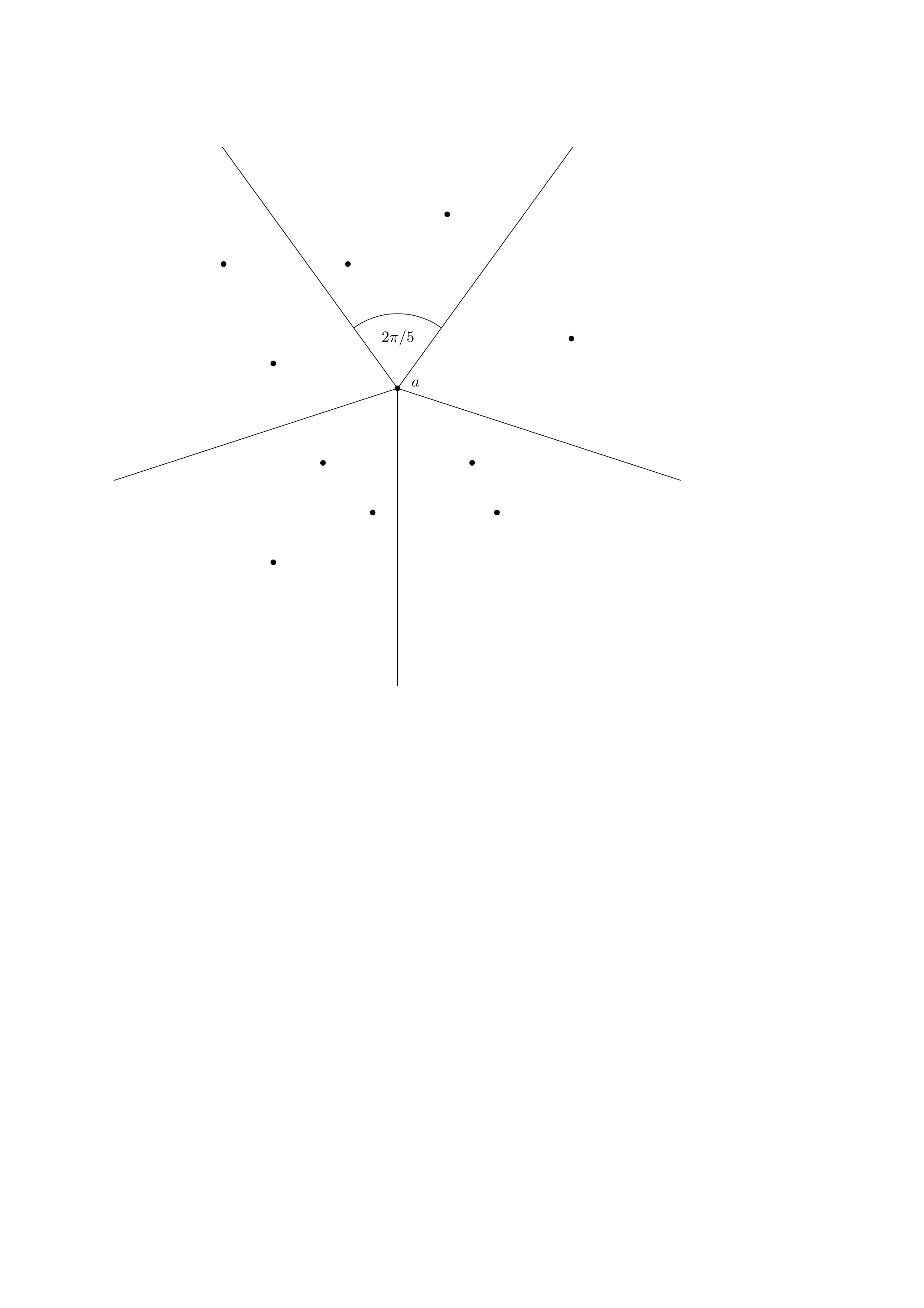}
		\caption{Measure of the distance to point $a$.}%
		\label{fig:construction-theta-5}
	\end{subfigure}%
	\hfill
	\begin{subfigure}[b]{0.45\textwidth}
		\centering
		\includegraphics[page=9,width=\textwidth]{figures/construction.pdf}
		\caption{The neighbors of \(a\) in the \(\Theta_5\)-graph.}
	\end{subfigure}
	\caption{The area around a point $a$ is divided into cones with angle $2\pi/5$.}\label{construction}
\end{figure}

$\Theta_k$-graphs were introduced simultaneously by Keil and
Gutwin~\cite{keiltheta,keil1992}, and Clarkson~\cite{clarkson1987}. Both papers
gave a spanning ratio of $1/(\cos\theta - \sin\theta)$, where $\theta=2\pi/k$
is the angle defined by the cones.
Observe that this gives a constant spanning ratio for $k\geq9$.
When this ratio \(t\) is constant, we call the graph a \(t\)-spanner.
Ruppert and Seidel~\cite{ruppert} improved this to
$1/(1-2\sin(\theta/2))$, which applies to $\Theta_k$-graphs with $k\geq 7$.
Chew~\cite{chewtd} gave a tight bound of $2$ for $k=6$.
Bose, De Carufel, Morin, van Renssen, and
Verdonschot~\cite{morenotbetter} give the current best bounds on the spanning
ratio of a large range of values of $k$.
For $k=5$, Bose, Morin, van Renssen, and Verdonschot~\cite{sander5} showed an
upper bound of $9.96$, and a lower bound of $3.78$.
For $k=4$, Bose, De Carufel, Hill, and Smid~\cite{soda19} showed a spanning
ratio of $17$, while Barba, Bose, De Carufel, van Renssen, and
Verdonschot~\cite{barbat4} gave a lower bound of $7$ on the spanning ratio.
For $k=3$,
although
Aichholzer, Bae, Barba, Bose, Korman, van Renssen, Taslakian, and
Verdonschot~\cite{theta3connected}
showed \(\Theta_3\) to be connected,
El Molla~\cite{nawarphd} showed that there is no constant $t$ for which
$\Theta_3$ is a $t$-spanner.
\begin{comment}
\begin{table}
\centering
\caption{Spanning Ratios for $\Theta$-Graphs}
\begin{tabular}{ c c c}
$\theta$ & Lower Bound & Upper Bound\\
\hline\hline
2,3 & $\infty$\cite{nawarphd}& Not a spanner\cite{nawarphd} \\
4 & 7\cite{barbat4}& $17$\cite{BCHS19} \\
5 & $\approx 3.798$\cite{sander5} & $\approx 9.96$\cite{sander5} \\
6 & 2\cite{bonichon-theta} & 2\cite{bonichon-theta}\\
$\geq 7$ & 1 & $\frac{1}{1-2\cdot\sin(\pi/k)}$
\cite{ruppert}
\\\hline
\end{tabular}
\end{table}
\end{comment}

In this paper we study the spanning ratio of $\Theta_5$.
We consider two arbitrary vertices, $a$ and $b$, and show that there must exist
a short path between them using induction on the rank of the Euclidean distance
$|ab|$ among all distances between pairs of points in $P$.
Our main result states that
for all \(a, b \in P\)
the shortest path $\pathF{a,b}$ has
length $|\pathF{a,b}| \leq K \cdot |ab|$, where $K=5.70$.

Much of the difficulty in bounding the spanning ratio of the $\Theta_5$-graph stems from the following.
\begin{enumerate}
	\item The regular pentagon is not centrally symmetric.
	\item Give two vertices $a$ and $b$, it may be the case that every vertex $v$ adjacent to $a$ has the property that $|vb|>|ab|$. In other words, all the neighbours of $a$ are farther from $b$ than $a$ itself. 
\end{enumerate}

We organize the rest of the paper as follows.
In Section~\ref{prelim} we introduce concepts and notation, and give some
assumptions about the positions of $a$ and $b$ that do not reduce the
generality of our arguments.
In Section~\ref{casebycase} we solve all but a handful of cases using general
arguments that simplify the analysis. The remaining cases
are solved using ad-hoc methods, showing a spanning ratio of $K=6.16$.
In Section~\ref{down2K} we observe that only a single case requires $K \geq
6.16$.
We analyze this case in detail to show that $|\pathF{a,b}|\leq K\cdot |ab|$ for
all $K\geq5.70$.
\ifconclusion%
In Section~\ref{conclusion} we discuss directions for future work.%
\fi

\section{Preliminaries}\label{prelim}

Let $k\geq 3$ be an integer.
 Let \(P\) be set of points in the plane in general position, that is, all
distances (as defined below) between pairs of points are unique and no two points have the same $x$-coordinate or $y$-coordinate.
Construct the $\Theta_k$-graph of \(P\) as follows.
\aurelien{Is this general position assumption necessary? Can we not break ties
	in a consistent way? Should we say something like ``While this assumption makes
	the analysis simpler, it is not difficult to see that ties can be broken
	consistently when \(P\) is degenerate.''?}
%we
%
The vertex set is \(P\).
For each $i$ with $0 \leq i < k$, let $\mathcal{R}_i$ be the ray
emanating from the origin that makes an angle of $2 \pi i /k$ with the negative
$y$-axis.\footnote{Angle values are given counter-clockwise unless otherwise stated.}
All indices are manipulated mod $k$, i.e., $\mathcal{R}_k = \mathcal{R}_0$.
For each vertex $v$ we add at most $k$ outgoing edges as follows:
For each $i$ with $0 \leq i < k$, let $\mathcal{R}_i^v$ be the ray emanating
from $v$ parallel to $\mathcal{R}_i$. Let $C_i^v$ be the cone consisting of all
points in the plane that are strictly between the rays $\mathcal{R}_i^v$ and
$\mathcal{R}_{i+1}^v$ or on $\mathcal{R}_{i+1}^v$. If $C_i^v$ contains at least
one point of $P \setminus \{v\}$, then let $w_i$ be the \emph{closest} such point to
$v$, where we define the closest point to be the point whose perpendicular
projection onto the bisector of $C_i^v$ minimizes the Euclidean distance to \(v\).
We add the directed edge $vw_i$ to the graph.
While the use of directed edges better illustrates this construction,
in what follows we regard all edges of a $\Theta_5$-graph as undirected. See Fig. \ref{construction} for an example of cones and construction.

\begin{figure}\centering
	\begin{subfigure}[b]{0.45\textwidth}
		\centering
		\includegraphics[page=1,width=\textwidth]{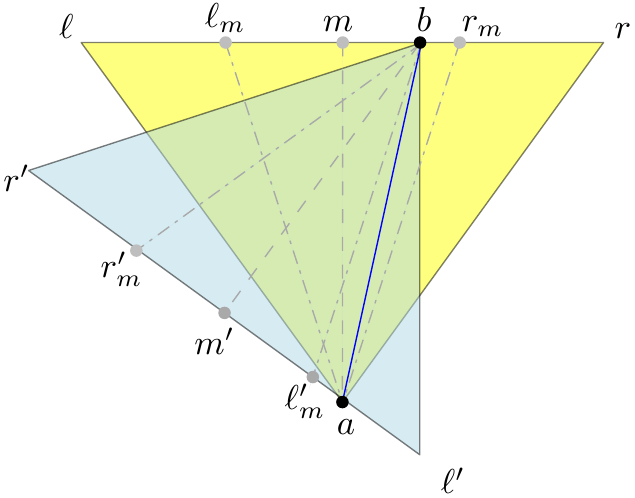}
		\caption{Assume $b$ is in $C_2^a$ and $a$ is in $C_4^b$.}%
		\label{examplet5}
	\end{subfigure}%
	\hfill
	\begin{subfigure}[b]{0.45\textwidth}
		\centering
		\includegraphics[page=2,width=\textwidth]{figures/generalcase.pdf}
		\caption{The angle $\alpha$.}%
		\label{alpha}
	\end{subfigure}
	\caption{Vertices $a$ and $b$ and the canonical triangles $T_{ab}$ and $T_{ba}$.}\label{layout}
\end{figure}

For the following description, refer to Fig. \ref{layout}. Consider two vertices $a$ and $b$ of $P$.
Given the \(\Theta_5\)-graph of \(P\),
we define the \emph{canonical triangle} $T_{ab}$ to be the triangle bounded by
the sides of the cone of $a$ that contains $b$ and the line through $b$ perpendicular to the
bisector of that cone.
Note that for every pair of vertices $a$ and $b$ there are two corresponding canonical triangles, namely $T_{ab}$ and $T_{ba}$.
Without loss of generality assume
that $b$ is in $C_2^a$. Let $\ell$ be the leftmost vertex of the triangle $T_{ab}$ and let
$r$ be the rightmost vertex of the triangle $T_{ab}$. Let $m$ be the midpoint of $\ell r$.
Note that $a$ must be in $C_4^b$ or $C_0^b$; since the cases are symmetric in what follows, without loss of generality we
consider the case where $a$ is in $C_4^b$. Thus $b$ is to the right of $m$. Let
$r_m$ be the intersection of $\ell r$ and the bisector of $\angle ram$%
\footnote{%
In what follows we use \(\triangle abc\) to denote the triangle defined by the
points \(a\), \(b\), and \(c\) (given counter-clockwise).
We use \(\angle abc\) to denote the amplitude of the angle at \(b\) in that
triangle.%
}%
, and let
$\ell_m$ be the intersection of $\ell r$ and the bisector of $\angle ma\ell$.
Let $\ell'$ and $r'$ be the left and right endpoints of $T_{ba}$ respectively (as seen from $b$ facing $a$).
Let $m'$ be the midpoint of $\ell'r'$, and let $\ell_m'$ and $r_m'$ be the
intersections of $\ell'r'$ and the bisector of $\angle \ell'bm'$ and $\angle
m'br'$ respectively. See Figure~\ref{examplet5}. Let $\alpha = \angle bam$
and let $\alpha' = \angle abm'$. Note that $\alpha +\alpha' = \pi/5$ since
$\alpha$ and $ \frac{2 \pi}{5} - \alpha'$ are alternate interior angles.
Thus either $\alpha\leq
\pi/10$ or $\alpha'\leq \pi/10$. Without loss of generality, we assume $\alpha
\leq \pi/10$. Let $c$ be the closest neighbor to $a$ in $C_2^a$, and let $d$
be the closest neighbor to $b$ in $C_4^b$. See Figure~\ref{alpha}.
For simplicity, we write ``$\Theta_5$''
to mean ``the $\Theta_5$-graph of $P$''.

To sum up our assumptions following this discussion: Without loss of generality we assume that $b$ is in $C_2^a$, $a$ is in $C_4^b$, $c$ is the nearest neighbour of $a$ in $C_2^a$ and $d$ is the nearest neighbour of $b$ in $C_4^b$. In addition, we refer back to this assumption, recalling that $\alpha$ is the clockwise angle $ab$ makes with the vertical axis.
\begin{observation}\label{obs}
	Let $\alpha$ be clockwise angle $ab$ makes with the vertical axis. Then $0\leq \alpha \leq \pi/10$.
\end{observation}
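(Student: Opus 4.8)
The statement to prove, Observation~\ref{obs}, is essentially a restatement of assumptions that were set up carefully in the preceding discussion. The plan is therefore to recapitulate that reasoning cleanly as a short proof. First I would recall the setup: $b$ lies in cone $C_2^a$, so the segment $ab$ falls inside the cone bounded by the rays $\mathcal{R}_2^a$ and $\mathcal{R}_3^a$. Since $\mathcal{R}_i$ makes angle $2\pi i/5$ with the negative $y$-axis, the cone $C_2^a$ subtends clockwise angles with the (upward) vertical axis ranging over an interval of width $2\pi/5$; a direct computation of which interval, together with the convention that $\alpha$ is measured clockwise from the vertical, gives $-\pi/5 \le \alpha \le \pi/5$ a priori (or the appropriate half-open version coming from the cone-boundary convention). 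This establishes the crude two-sided bound.

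Next I would invoke the symmetry reductions already made in the text. The condition $a \in C_4^b$ (rather than $a \in C_0^b$) was chosen without loss of generality, and this pins down the sign: it forces $b$ to lie to the right of $m$, the midpoint of $\ell r$, which is exactly the statement that $\alpha = \angle bam \ge 0$. So the lower bound $0 \le \alpha$ is a consequence of the $C_4^b$ assumption. For the upper bound, I would use the relation $\alpha + \alpha' = \pi/5$ derived from alternate interior angles (with $\alpha' = \angle abm'$), which shows that $\alpha$ and $\alpha'$ cannot both exceed $\pi/10$; the text then takes $\alpha \le \pi/10$ without loss of generality. Combining, $0 \le \alpha \le \pi/10$.

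The only thing requiring a hair of care is making sure the ``without loss of generality'' steps are genuinely symmetric, i.e., that relabeling does not interact badly with the earlier choice that $b \in C_2^a$. I expect this to be the main (minor) obstacle: one must check that swapping the roles of $a$ and $b$ (to reduce to $\alpha \le \pi/10$) and reflecting (to reduce $a \in C_0^b$ to $a \in C_4^b$) together act on the configuration in a way that preserves the $\Theta_5$-structure and the quantity we ultimately want to bound, namely $|\pathF{a,b}|/|ab|$. Since $\Theta_5$-graphs are invariant under rotation by $2\pi/5$ and under reflection, and the spanning-ratio quantity is symmetric in $a$ and $b$, these reductions are legitimate. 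With that observed, the proof is just the assembly of the three facts above, so I would keep it to a few lines and refer back to Figures~\ref{examplet5} and~\ref{alpha} for the angle identities rather than re-deriving them.
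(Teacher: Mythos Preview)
Your proposal is correct and matches the paper's treatment: the paper gives no separate proof of Observation~\ref{obs} at all, since it is simply a one-line summary of the WLOG reductions made in the paragraph immediately preceding it (namely $a\in C_4^b$ gives $\alpha\ge 0$, and $\alpha+\alpha'=\pi/5$ together with the choice $\alpha\le\alpha'$ gives $\alpha\le\pi/10$). Your write-up is, if anything, more careful than the paper in justifying that the symmetry reductions are compatible, which is fine but not strictly needed here.
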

%\textcolor{red}{Maybe revisit this after. Perhaps enumerate these}.

We proceed by induction to bound the spanning ratio of $\Theta_5$.
We show that, for any pair of points \(a, b \in P\),
the length of a shortest path \(|\mathcal{P}(a,b)|\) in \(\Theta_5\)
is at most \(K\) times the Euclidean distance between its
endpoints. The induction is on the rank of the Euclidean distance
\(|ab|\) among all distances between pairs of points in \(P\).
The exact bound on \(K\) is made explicit in the proof.
Lemma~\ref{basecase} is sufficient for the base case of the induction, but we first require the following geometric lemma:

%\input{text/theorem/base-case}
%\section{Proof of Lemma~\ref{basecase}}\label{lemma1details}

%To prove Lemma~\ref{basecase} we require the following geometric lemma.

\begin{lemma}\label{trianglelemma}
	Let $\mathcal{T}$ be a triangle $\triangle pqr$,
	and without loss of generality assume that
	$|pq|\leq|pr|$. Then for all points $s \in \mathcal{T}$, $|ps|\leq
	|pr|$.
\end{lemma}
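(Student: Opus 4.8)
The plan is to show that the function $s \mapsto |ps|$ attains its maximum over the compact set $\mathcal{T}$ at a vertex, and then to identify which vertex. First I would invoke convexity: the triangle $\mathcal{T} = \triangle pqr$ is a convex polygon, and the function $f(s) = |ps|$ is a convex function of $s$ (it is the composition of the norm with the affine map $s \mapsto s - p$). A convex function on a convex polytope attains its maximum at an extreme point, i.e., at one of the vertices $p$, $q$, or $r$. Hence for every $s \in \mathcal{T}$ we have $|ps| \le \max\{|pp|, |pq|, |pr|\} = \max\{0, |pq|, |pr|\}$.

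It then remains to observe that this maximum is $|pr|$. Since $|pp| = 0 \le |pr|$ and, by the assumption $|pq| \le |pr|$, we get $\max\{0, |pq|, |pr|\} = |pr|$, which gives $|ps| \le |pr|$ for all $s \in \mathcal{T}$, as required.

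Alternatively, if one prefers to avoid citing the convexity-maximum principle, I would argue directly: fix $s \in \mathcal{T}$ and draw the ray from $p$ through $s$. Since $s$ lies in the triangle, this ray exits $\mathcal{T}$ through a point $s'$ on the boundary with $|ps| \le |ps'|$, and $s'$ lies on one of the three sides. If $s'$ lies on side $qr$, then $s'$ is a convex combination of $q$ and $r$, so $|ps'| \le \max\{|pq|,|pr|\} = |pr|$ by the triangle inequality (the distance from $p$ to a point on segment $qr$ is a convex function of the parameter and hence bounded by its values at the endpoints). If $s'$ lies on side $pq$ or side $pr$, then $|ps'| \le |pq| \le |pr|$ or $|ps'| \le |pr|$ directly. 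In all cases $|ps| \le |ps'| \le |pr|$.

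I do not anticipate a serious obstacle here; the only mild subtlety is making the "maximum of a convex function on a polygon is at a vertex" step rigorous if one wants a self-contained argument, which the direct ray argument handles cleanly. The essential content is just the triangle inequality together with the fact that $s$ can be pushed to the boundary and then to a vertex without decreasing its distance to $p$.
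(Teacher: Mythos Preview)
Your proposal is correct, and your alternative ``direct ray'' argument is essentially the paper's proof: the paper projects $s$ along the ray from $p$ onto the opposite side $qr$ (note that since $p$ is a vertex, this ray always exits through $qr$, so your case split over the three sides is unnecessary), and then argues that the distance from $p$ to a point of $qr$ is convex---parameterized by angle rather than by arc length---hence maximized at $q$ or $r$. Your first approach via the general principle that a convex function on a polytope attains its maximum at a vertex is a cleaner packaging of the same idea.
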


\begin{proof}
	(Figure~\ref{t5:geo})
	Let $s'$ be the intersection of the line through $ps$ onto $qr$, thus
	$|ps| \leq |ps'|$ and it is enough to show that $|ps'| \leq |pr|$.
	The distance from \(p\) to \(s'\) is a convex function of the angle $\angle(spq)$. The minimum of this function is
	attained when the lines through \(ps'\) and \(qr\) are orthogonal.
	Therefore the maximum is attained either at \(s'=q\) or \(s'=r\), whichever
	is furthest.
\end{proof}

\begin{figure}
	\begin{subfigure}[b]{0.45\textwidth}
		\centering
		\includegraphics[page=10,scale=.7]{figures/construction.pdf}
		\caption{Two examples for the position of $s$.}%
		\label{t5:geo}
	\end{subfigure}%
	\hfill
	\begin{subfigure}[b]{0.45\textwidth}
		\centering
		\includegraphics[page=30,scale=.7]{figures/generalcase.pdf}
		\caption{The triangles $T_{ab}^\ell$ and $T_{ab}^r$.}%
		\label{t5:landr}
	\end{subfigure}
	\caption{}
\end{figure}

\state{LemmaBaseCase}{lemma}{\label{basecase}
	Let $(a_0,b_0)$ be the pair of points in $P$ that minimizes \(|ab|\) over
	all points \(a\) and \(b\) in \(P\).
	The $\Theta_5$-graph of $P$ contains the edge $a_0b_0$.%
}

\begin{proof}
	(See Figure~\ref{t5:landr}.)
	Assume by contradiction that \(\Theta_5\) does not contain \(ab\), then
	some point \(p \in P\) different from \(a\) or \(b\) is contained in
	\(T_{ab}\).
	We show that \(|bp| < |ab|\), hence \(ab\) is not the closest pair in
	\(P\).

	Divide $T_{ab}$ into two triangles by separating $T_{ab}$
	along $ab$ into the left triangle $T_{ab}^\ell$ and the right triangle
	$T_{ab}^r$.
	Then \(p\) belongs to one of these triangles.
	Observation \ref{obs} gives us that \(0 \leq \alpha \leq \pi / 10\),
	and thus \(|ba| \geq |b \ell| \geq |br|\) and in both cases we can apply Lemma~\ref{trianglelemma}.%\textcolor{red}{This would be where we put Observation 1 if we do it that way. }
\end{proof}

If \(ab \in \Theta_5\), then $|\pathF{a,b}|\leq K|ab|$ holds for all $K\geq 1$.
Otherwise we assume the following induction hypothesis:
for every pair of points $a', b' \in P$
where $|a'b'|<|ab|$, the shortest path $\pathF{a',b'}$ from $a'$ to
$b'$ has length at most $|\pathF{a',b'}|\leq K\cdot |a'b'|$, for some $K \geq
1$. Our goal is to find the minimum value of $K$ for which our inductive
argument holds.

Recall that $c$ is the closest point to $a$
in $C_2^a$ and $d$ is the closest point to $b$ in $C_4^b$.
We restrict our analysis to the following three paths:
\begin{enumerate}[(1)]
	\item \label{path:unos} $ac+\pathF{c,b}$,
	\item \label{path:dos} $bd+\pathF{d,a}$, and
	\item \label{path:ttres} $ac+\pathF{c,d}+db$.
\end{enumerate}
Depending on the particular arrangement of $a$, $b$, $c$, and $d$, we examine
a subset of these and find a minimum value for $K$ that satisfies at least one
of the following inequalities:
\begin{enumerate}[(A)]
	\item \label{A:unos} $|ac| + K\cdot |cb| \leq K\cdot|ab|$,
	\item \label{B:dos} $|bd| + K\cdot |da| \leq K\cdot|ab|$, and
	\item \label{C:ttres} $|ac| + K\cdot |cd| + |db| \leq K\cdot|ab|$.
\end{enumerate}

Observe that our inductive argument follows if any of these cases holds.
For instance, if we prove \ref{A:unos} holds for some value $K$, it implies
that $|cb|<|ab|$ (since all distances are positive), and thus $|\pathF{c,b}|\leq
K\cdot |cb|$ by the induction hypothesis.
Similar conclusions follow for statements~\ref{B:dos} and~\ref{C:ttres}.
Thus we can combine
\ref{path:unos}-\ref{path:ttres} with \ref{A:unos}-\ref{C:ttres} as follows.
\begin{enumerate}[(a)]
	\item \label{unos} $|\pathF{a,b}| \leq |ac|+|\pathF{c,b}| \leq |ac|+K\cdot |cb| \leq K\cdot|ab|$.
	\item \label{dos} $|\pathF{a,b}| \leq |bd|+|\pathF{d,a}| \leq |bd|+K\cdot |da| \leq K\cdot|ab|$.
	\item \label{ttres} $|\pathF{a,b}| \leq |ac|+|\pathF{c,d}|+|db| \leq |ac|+K\cdot |cd|+|db| \leq K\cdot|ab|$.
\end{enumerate}

For any given arrangement of vertices we prove that at least one of
\ref{A:unos}, \ref{B:dos}, or \ref{C:ttres} holds true for some value $K$, and
find the smallest value for which this is true. Our proof relies mainly on case
analysis, but some of these cases have similar structure.
We exploit this structure in Section~\ref{casebycase}
by designing two geometric lemmas that we apply repeatedly in the inductive step.
These lemmas, along with additional arguments, are then applied to
different arrangements of $a$, $b$, $c$, and $d$.
For all but one case we show that at least one of~\ref{unos},~\ref{dos},
or~\ref{ttres} holds true for $K\geq 5.70$.
The last case requires $K\geq 6.16$.
We improve this further to $K\geq 5.70$,
but due to the complexity of this last case,
we dedicate Section~\ref{down2K} to its analysis.

\section{Analysis}\label{casebycase}

We first introduce two triangles $\Tr$ and $\Tc$ for which inequalities of the form of~\ref{A:unos}
and~\ref{B:dos} hold for reasonable values of \(K\) (see Figure~\ref{examplet5s:sketch}). Note the triangles are numbered to correspond to the lemmas they appear in. 
We state these inequalities as lemmas whose
repeated use simplifies the proof of our main result. 
%
%Due to space constraints, proof of these lemmas can be found in the full paper. 

%\input{text/theorem/t253}

\begin{figure}
	\centering
	\begin{subfigure}[b]{0.45\textwidth}
		\includegraphics[page=1,scale=.7]{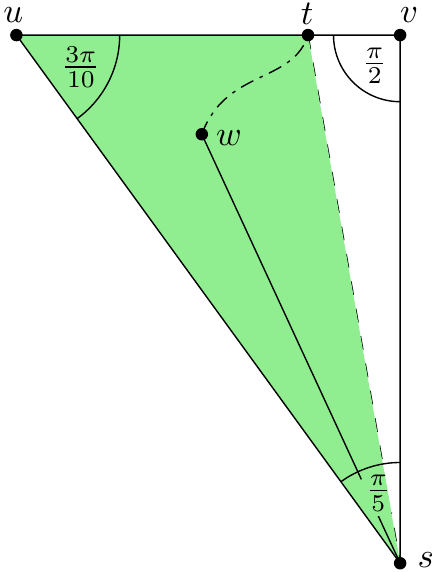}
		\caption{%
			$\Tr$ has angles $(\frac{\pi}{5},\frac{\pi}{2},\frac{3\pi}{10})$.%
		}%
		\label{fig:t253:sketch}
	\end{subfigure}%
	\hfill
	\begin{subfigure}[b]{0.45\textwidth}
		\centering
		\includegraphics[page=2,scale=.7]{figures/generic-triangles-sketch.pdf}
		\caption{
			$\Tc$ has angles $(\frac{3\pi}{10},\frac{3\pi}{10},\frac{2\pi}{5})$.%
		}%
		\label{fig:t334:sketch}
	\end{subfigure}
	\caption{Triangles $\Tr$ and $\Tc$.}%
	\label{examplet5s:sketch}
\end{figure}

%\vspace{-1.7ex}
%\input{text/theorem/t334}

%\section{Proof of Lemma~\ref{t253}}\label{t253:details}

\state{LemmaTRight}{lemma}{\label{t253}
	(Figure~\ref{fig:t253:sketch})
	Let $\Tr$ be a triangle with vertices \((s,v,u)\) and corresponding
	interior angles $(\frac{\pi}{5},\frac{\pi}{2},\frac{3\pi}{10})$.
	Let \(t\) be a point on \(uv\) and
	let \(w\) be a point inside \(\triangle stu\).
	Then $|sw| + K|wt| \leq K|st|$ for all $K \geq 4.53$.%
}

\begin{proof}
	(Figure~\ref{examplet52})
	We show $\Phi = |\qs\qw|+K|\qw\qt|-K|\qs\qt| \leq 0$.
	Without loss of generality, orient $\Tr$ so that $u$ and $v$ define a
	horizontal line with \(u\) left of \(v\) and with \(s\) below that line.
	Let $\qw_r$ be the horizontal
	projection of $\qw$ onto $\qs\qt$, and let $\qw_\ell$ be the horizontal
	projection of $\qw$ onto $\qs\qu$. We have
	$|\qw\qw_r|+|\qw_r\qt|\geq|\qw\qt|$ by the triangle inequality. We also
	have that $\angle \qs\qw\qw_\ell \geq \pi/2$, which implies that
	$\qs\qw_\ell$ is the longest edge in triangle $\qs\qw\qw_\ell$
	(the triangle can be drawn inside a disk whose diameter is \(sw_{\ell}\)),
	and thus
	$|\qs\qw_\ell|\geq |\qs\qw|$. Since $\qw$ is on $\qw_\ell\qw_r$, we have
	$|\qw_\ell\qw_r|\geq |\qw\qw_r|$. Thus
	\begin{align*}
	\Phi &= |\qs\qw|+K|\qw\qt|- K|\qs\qt|\\
	&\leq |\qs\qw_\ell|+K(|\qw\qw_r|+|\qw_r\qt|) -K(|\qs\qw_r|+|\qw_r\qt|)\\
	&\leq |\qs\qw_\ell|+K|\qw_\ell\qw_r|- K|\qs\qw_r| = \Phi'.
	\end{align*}

	Let \(\beta = \angle vst \geq 0\).
	Observe that \(\Phi'\) increases as $\beta$ decreases,
	since $|\qs\qw_r|$ decreases while
	$|\qw_\ell\qw_r|$ increases and $|\qs\qw_\ell|$ stays constant.
	Hence,
	$\Phi'$ is maximized when $\beta = 0$, that is, when $\qw_r$ lies on
	$\qs\qv$.
	Thus assume that $\qw_r$ lies on $\qs\qv$ and let $|\qs\qw_\ell|=1$ without
	loss of generality.
	We bound \(\Phi'\) in terms of \(\angle w_rsw_{\ell} = \frac \pi5\):
	\begin{displaymath}
		\Phi'
		\leq
		1 + K \sin\left(\frac \pi5\right) - K \cos\left(\frac \pi5\right).
	\end{displaymath}

	Solving for \(K\) we get \(\Phi \leq \Phi' \leq 0\) when
	\begin{displaymath}
		K
		\geq \frac{1}{
			\cos(\frac \pi5)-\sin(\frac \pi5)
		}
		= 4.52\ldots
	\end{displaymath}
\end{proof}

\begin{figure}
	\centering
	\begin{subfigure}[b]{0.45\textwidth}
		\includegraphics[page=1,scale=.7]{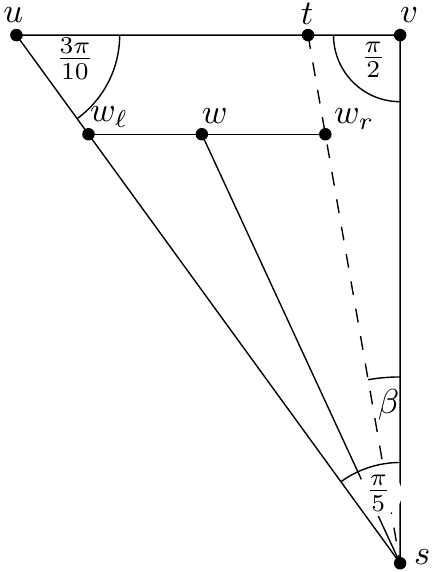}
		\caption{$\Tr$ has angles $(\frac{\pi}{5},\frac{\pi}{2},\frac{3\pi}{10})$.}%
		\label{examplet52}
	\end{subfigure}%
	\hfill
	\begin{subfigure}[b]{0.45\textwidth}
		\centering
		\includegraphics[page=2,scale=.7]{figures/generic-triangles.pdf}
		\caption{$\Tc$ has angles $(\frac{3\pi}{10},\frac{3\pi}{10},\frac{2\pi}{5})$.}%
		\label{examplet53}
	\end{subfigure}
	\caption{Detailed analysis of triangles $\Tr$ and $\Tc$.}%
	\label{examplet5s}
\end{figure}

%\section{Proof of Lemma~\ref{t334}}\label{t334:details}

\state{LemmaTCone}{lemma}{\label{t334}
	(Figure~\ref{fig:t334:sketch})
	Let $\Tc$ be a triangle with vertices \((s,v,u)\) and corresponding interior angles
	$(\frac{3\pi}{10},\frac{3\pi}{10},\frac{2\pi}{5})$.
	Let $t$ be a point on
	$uv$ such that $\angle vst \leq \pi/10$
	and
	let \(w\) be a point inside \(\triangle stu\).
	Then $|sw| + K|wt| \leq K|st|$ for all $K \geq 5.70$.%
}

\begin{proof}
	(Figure~\ref{examplet53})
	We show $\Phi = |\qs\qw|+K|\qw\qt|-K|\qs\qt| \leq 0$ by case analysis.

	Case 1) $\angle vsw \leq \frac \pi5$ (Figure~\ref{useTr}):
	Let $v'$ be the orthogonal
	projection of $\qt$ onto $\qs\qv$. Let $\qu'$ be the point on the line
	through $\qt$ and $v'$ such that $\angle v'\qs\qu' = \frac \pi5$. Observe
	that $\triangle sv'u'$ corresponds to $\Tr$ of Lemma \ref{t253}
	and it contains $\qw$. Thus Lemma \ref{t253}
	tells us $\Phi \leq 0$ for all $K \geq 4.53$.

	Case 2) $\angle vsw > \frac \pi5$ (Figure~\ref{Tc}):
	Without loss of generality, orient $\Tc$ so that $u$ and $v$ define a
	horizontal line with \(u\) left of \(v\) and with \(s\) below that line.
	Let $\qw_r$ be the horizontal projection of $\qw$ onto $\qs\qt$, and let
	$\qw_\ell$ be the horizontal projection of $\qw$ onto $\qs\qu$.
	We have
	$|\qw\qw_r|+|\qw_r\qt|\geq|\qw\qt|$ by the triangle inequality.
	We also have
	that $\angle \qs\qw\qw_\ell > \pi/2$, which implies that $\qs\qw_\ell$
	is the longest edge in $\triangle \qs\qw\qw_\ell$ (the triangle can be
	drawn inside a disk whose diameter is \(sw_{\ell}\)), and thus
	$|\qs\qw_\ell|\geq |\qs\qw|$. Since $\qw$ is on $\qw_\ell\qw_r$, we have
	$|\qw_\ell\qw_r|\geq |\qw\qw_r|$. Thus
	\begin{align*}
	\Phi &= |\qs\qw|+K|\qw\qt|- K|\qs\qt|\\
	&\leq |\qs\qw_\ell|+K(|\qw\qw_r|+|\qw_r\qt|) -K(|\qs\qw_r|+|\qw_r\qt|)\\
	&\leq |\qs\qw_\ell|+K|\qw_\ell\qw_r|- K|\qs\qw_r| = \Phi'.
	\end{align*}
    We rewrite \(\Phi'\) in terms of \(\beta = \angle vst \geq 0\) using the
	sine law we get
    $$
    |\qs\qw_\ell| = \frac{|\qs\qw_r|\sin\left(\frac{3\pi}{10}+\beta\right)}{\sin\left(\frac{2\pi}{5}\right)}
    $$
    and 
    $$
    |\qw_\ell\qw_r| = \frac{|\qs\qw_r|\sin\left(\frac{3\pi}{10}-\beta\right)}{\sin\left(\frac{2\pi}{5}\right)}.
    $$
    We normalize \(\Phi'\) by dividing each term by $\frac{|\qs\qw_r|}{\sin\left(\frac{2\pi}{5}\right)}$ which gives us
	$$
	\Phi' = \sin\left(\frac{3\pi}{10}+\beta\right) + K
	\sin\left(\frac{3\pi}{10}-\beta\right) -K
	\sin\left(\frac{2\pi}{5}\right).
	$$
	The derivative of \(\Phi'\) with respect to \(\beta\) is
	$$
	\frac{d \Phi'}{d \beta} = \cos\left(\frac{3\pi}{10}+\beta\right) - K\cos
	\left(\frac{3\pi}{10}-\beta\right).
	$$
	For all \(K \geq 1\),
	\(\frac{d \Phi'}{d \beta}(0)\) is negative
	and
	\(\frac{d \Phi'}{d \beta}(\beta)\)
	is monotone decreasing for \(0 \leq \beta \leq \frac{\pi}{10}\).
	Hence \(\frac{d \Phi'}{d \beta}\) is negative on the whole range \((K \geq 1)
	\times (0 \leq \beta \leq \frac{\pi}{10})\)
	and
	\(\Phi'\) is maximized at \(\beta = 0\) for all \(K \geq 1\). Thus
	\begin{displaymath}
		\Phi'
		\leq \Phi'(0)
		=
		\sin\left(\frac{3\pi}{10}\right)
		+ K \sin\left(\frac{3\pi}{10}\right)
		- K \sin\left(\frac{2\pi}{5}\right).
	\end{displaymath}
	Solving for \(K\) we get $\Phi \leq \Phi' \leq 0$ when
	\begin{displaymath}
		K
		\geq \frac{
			\sin\left(\frac{3\pi}{10}\right)
		}{
			\sin\left(\frac{2\pi}{5}\right)-\sin\left(\frac{3\pi}{10}\right)
		}
		= 5.69\ldots
	\end{displaymath}
\end{proof}

\begin{figure}
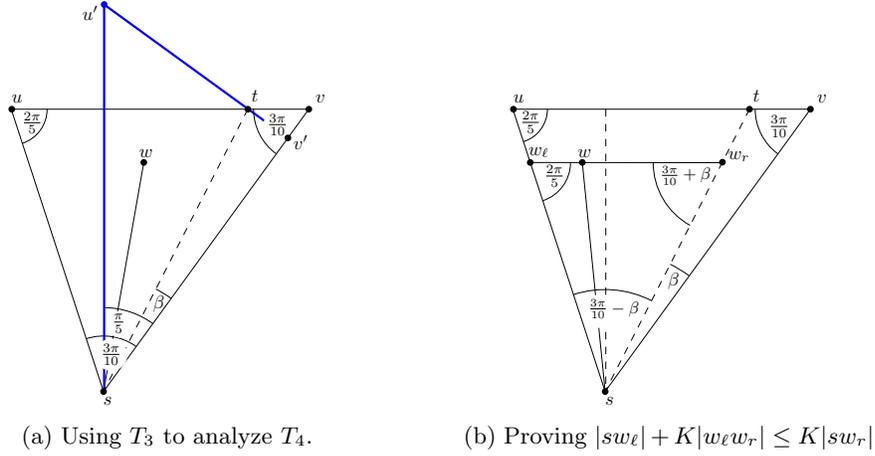

	\centering
	\begin{subfigure}[b]{0.45\textwidth}
		\centering
		\includegraphics[page=4,scale=.7]{figures/generic-triangles.pdf}
		\caption{Using $\Tr$ to analyze $\Tc$.}%
		\label{useTr}
	\end{subfigure}%
	\hfill
	\begin{subfigure}[b]{0.45\textwidth}
		\centering
		\includegraphics[page=3,scale=.7]{figures/generic-triangles.pdf}
		\caption{Proving $|\qs\qw_\ell|+K|\qw_\ell\qw_r|\leq K|\qs\qw_r|$}%
		\label{Tc}
	\end{subfigure}
	\caption{Analyzing triangle $\Tc$.}
\end{figure}

As in the definition of \(T_{ab}\) and \(T_{ba}\) in Section \ref{prelim}, let \(c\) be the point closest
to \(a\) in \(T_{ab}\) and let \(d\) be the point closest to \(b\) in
\(T_{ba}\).
We proceed by case analysis depending on the location of the points
\(c\) and \(d\).

\aurelien{Explicit the case \(d = c\)?}

If $c$ is to the right of \(ab\) or if $d$ is to the right of $ab$,
we can apply Lemma~\ref{t253} to show the existence of a short
path from $a$ to $b$.
When both $c$ and $d$ are left of $ab$,
we use a more complicated argument requiring a new definition:

\begin{definition}\label{def:pentagon}
	(Figure~\ref{p5})
	Given any pair of points \((a,b)\) in \(P\), let \(r'\) and \(r'_m\) be as
	in the definition of \(T_{ba}\) in Section \ref{prelim}.
	We define $P_{ab}$ to be the regular pentagon with vertices
	$(p_0,p_1,p_2=r',p_3=r'_m,p_4)$
	where \(p_4\) is above the line going through \(r'\) and \(r'_m\)
	(this uniquely defines the remaining points \(p_0\) and \(p_1\)).
\end{definition}

\begin{figure}
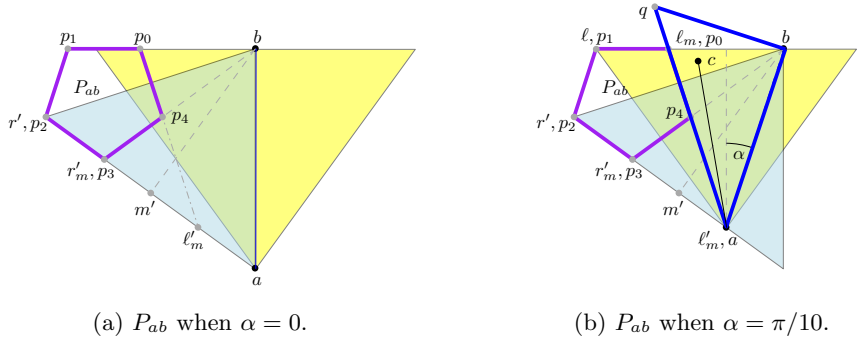

	\centering
	\begin{subfigure}[b]{0.45\textwidth}
		\centering
		\includegraphics[page = 20,scale=.7]{figures/generalcase.pdf}
		\caption{$P_{ab}$ when $\alpha=0$.}%
		\label{p51}
	\end{subfigure}%
	\hfill
	\begin{subfigure}[b]{0.45\textwidth}
		\centering
		\includegraphics[page = 21,scale=.7]{figures/generalcase.pdf}
		\caption{$P_{ab}$ when $\alpha = \pi/10$.}%
		\label{p52}
	\end{subfigure}
	\caption{The regular pentagon \(P_{ab}\).}%
	\label{p5}
\end{figure}

Observe that $P_{ab}$ is fixed with respect to $T_{ba}$.
This construction puts \(p_4\) inside \(T_{ab}\) and
puts $p_0$ and $p_1$ on a horizontal line with $b$, with
\(p_0\) lying on the boundary of \(T_{ab}\).

\state{ClaimPentagon}{claim}{\label{claim:pentagon}
	Given Definition~\ref{def:pentagon} we have that
	\(p_4 \in T_{ab}\),
	\(p_0 \in \ell b\),
	and \(p_1\) lies on the line through \(\ell\) and \(b\).%
}

\begin{wrapfigure}[12]{l}{0.4\textwidth}
	\centering
	\includegraphics[page=14,width=.35\textwidth]{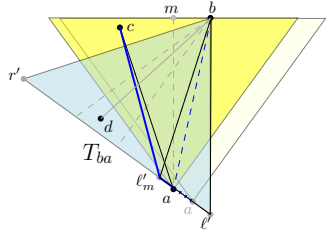}
	\caption{Transformation~\ref{transformt5}.}%
	\label{ctransform}
\end{wrapfigure}

\begin{proof}
	Note that $p_3p_4$ and $p_3b$ share the same supporting line
	since \(\angle p_2 p_3 p_4 = \angle p_2 p_3 b = \frac{3 \pi}{5}\).
	Let $f$ be the intersection of $a\ell$ and $p_3b$.
	Given this observation and this definition,
	it is equivalent to prove that \(p_4\) lies in the segment \(fb\).

	 Translate \(a\) on the segment \(\ell'_m \ell'\).
	Since the slope of \(a \ell\) is smaller than the slope of \(\ell'_m
	\ell'\), translating \(a\) to \(a = \ell'\), that is letting \(\alpha =
	0\), maximizes the \(y\)-intercept of the line going through \(a\) and
	\(\ell\) with any fixed vertical line.
	Hence this translation shrinks \(fb\), and
	it remains to prove that \(p_4\) stays in \(fb\)
	only in that extreme case.

	With the simplifying assumption that \(\alpha = 0\),
	we show that $|p_3f| < |p_3p_4| < |p_3b|$, which proves the claim.
	Note that $\angle \ell a p_3 = \pi/10$ and \(\angle p_3 f a = \pi/2\),
	thus $|p_3f| = |p_3a|\sin(\pi/10)$.
	We have $|p_3p_4| = |p_3p_2|
	= |p_3a|\sin{(\frac{\pi}{10}})/\sin{(\frac{3 \pi}{10})}$.
	Since $|p_3a|=|p_3b|$, we obtain
	\begin{displaymath}
		|p_3b|\sin{\left(\frac{\pi}{10}\right)}
		<
		|p_3b|
		\frac{\sin{(\frac{\pi}{10}})}{\sin{(\frac{3 \pi}{10})}}
		<
		|p_3b|.
	\end{displaymath}
\end{proof}

Given this definition, we consider the following cases:
When \(c\) is not in \(P_{ab}\) we prove $|ac| + |\pathF{c,b}| \leq 5.70 |ab|$.
When \(d\) is not in \(P_{ab}\) we prove $|bd| + |\pathF{d,a}| \leq 5.70 |ab|$.
When both $c$ and $d$ are in $P_{ab}$ we analyze the
length of the path $ac + \pathF{c,d} + db$.
Lemma~\ref{case8} gives us a bound of $6.16 |ab|$ with a simple proof.
Using a more technical analysis, we obtain a bound of $5.70 |ab|$.
This is proven in Lemma~\ref{newandimproved} in Section~\ref{down2K}.

Some of the proofs use the simplifying assumption that \(\alpha = \pi/10\).
This is achieved through the following transformation:
given \(a\), \(b\), \(c\), \(d \in P\) with \(T_{ab}\) and
\(T_{ba}\) as defined earlier, we define:
\begin{transformation}\label{transformt5}
	Fix $b$, $c$, $d$, and $T_{ba}$, and translate $a$ along $r'\ell'$ until $a = \ell'm$.
\end{transformation}

See Fig. \ref{ctransform}. Observe that this transformation changes $|ac|$ and $|ab|$, but not
$|bd|$, $|cd|$, or $|cb|$. The transformation also changes $|ad|$,
but we do not use it in any case that depends on this value.
We prove the following lemma allowing the application of Transformation~\ref{transformt5}
without loss of generality in several cases.

\state{LemmaTransformation}{lemma}{\label{posofc}
	Under Transformation~\ref{transformt5},
	the values of $|bd|$, $|cd|$, and $|cb|$ are unchanged, and
	$\Psi = |ac| - K|ab|$ is maximized when $a = \ell_m'$ for all
	\(K \geq 3.24\).%
}

\begin{wrapfigure}[1]{r}{0.20\textwidth}
	\centering
	\includegraphics[page=1]{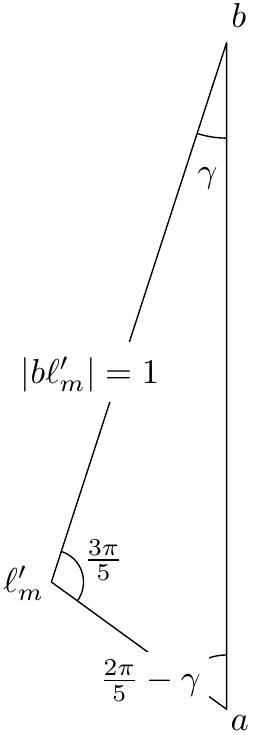}
	\caption{The values used in the proof of Lemma~\ref{posofc}.}%
	\label{ctransform-lemma-details}
\end{wrapfigure}
\noindent
\begin{minipage}[t]{0.75\textwidth}
	\begin{proof}
	(Figure~\ref{ctransform-lemma-details})
	Let $\gamma = \angle\ell_m'ba = \pi/10-\alpha$.
	Define $\Psi' = |a\ell_m'|+ |\ell_m'c| - K|ab|.$ Note by the triangle inequality that $\Psi'\geq \Psi$.
	We show that $\Psi'$ is monotonically decreasing in $\gamma$, which proves
	both $\Psi$ and $\Psi'$ are maximized when $\gamma = 0$ since then \(\Psi =
	\Psi'\).
	We let \(| b \ell_{m}' | = 1\) without loss of generality
	and express \(\Psi'\) as a function of \(\gamma\)
	using the law of sines:

	Using
	$
		|a\ell_m'| = \frac{\sin{\gamma}}{\sin{(\frac{2 \pi}{5} - \gamma)}}
	$
	and
	$
		|ab| = \frac{\sin{(\frac{3 \pi}{5})}}{\sin{(\frac{2 \pi}{5} - \gamma)}}
			= \frac{\sin{(\frac{2 \pi}{5})}}{\sin{(\frac{2 \pi}{5} -
			\gamma)}}
	$,
	we have
	\begin{displaymath}
		\Psi' = \frac{\sin{\gamma} - K \sin{(\frac{2 \pi}{5})}}{\sin{(\frac{2
		\pi}{5} - \gamma)}} + \underbrace{| \ell_m' c
	|}_{\text{Independent of }\gamma}.
	\end{displaymath}
	Hence,
	\begin{align*}
	\frac{d\Psi'}{d\gamma}
	&= \frac{\cos\gamma\sin(\frac{2\pi}{5}-\gamma) +
	\cos(\frac{2\pi}{5}-\gamma)(\sin\gamma -
K\sin(\frac{2\pi}{5}))}{\sin^2(\frac{2\pi}{5}-\gamma)}\\
	&= \frac{\sin(\frac{2\pi}{5}) ( 1 - K
	\cos(\frac{2\pi}{5}-\gamma))}{\sin^2(\frac{2\pi}{5}-\gamma)}.
	\end{align*}
	Since \(0 \leq \gamma \leq \pi/10\),
	the denominator is positive on the whole range
	and the numerator is maximized when $\gamma = 0$.
	Since $\sin(\frac{2\pi}{5})$ is positive, it suffices to satisfy
	$1 - K \cos(\frac{2\pi}{5}) \leq 0$:
	\begin{displaymath}
		K \geq \frac{1}{\cos{(\frac{2 \pi}{5})}} = 3.23\ldots
	\end{displaymath}
\end{proof}

\end{minipage}
\vspace{0.5cm}
\\By Lemma \ref{posofc} we see that by applying Transformation \ref{transformt5} we maximize the value $|ac|-K|ab|$. Another way to see this is that we minimize $K|ab|$. This, in turn, allows us to explicitly determine under what conditions the inductive hypothesis applies. Note that applying Transformation~\ref{transformt5} to where \(a = \ell_m'\)
is equivalent to assuming $\alpha = \pi/10$.

All these proofs can be combined in an analysis comprising
\emph{eight} cases
depending on the location of \(c\) and \(d\) with respect to \(T_{ab}\),
\(T_{ba}\), and \(P_{ab}\), as illustrated below in the breakdown of the case analysis below. In each case we prove that for a given  arrangement of vertices that $|\pathF{a,b}|\leq K|ab|$ for the given value $K$.\\

%\paragraph*{Cases}
\begin{algorithm}[H]
	\renewcommand*{\algorithmcfname}{Breakdown of the case analysis}
\SetAlgoLined
\vspace{0.25cm}
\begin{enumerate}
	\item If \(c\) is right of \(ab\), then $K\geq 4.53$ by Lemma~\ref{case1}.
	\item If \(d\) is right of \(ab\), then $K\geq 4.53$ by Lemma~\ref{case4}.
	\item Else both $c$ and $d$ are left of \(ab\). We have the following cases:
	\begin{enumerate}
		\item If $c$ is in $T_{ba}$, then $K\geq 5.70$ by Lemma~\ref{case2}.
		\item Else $c$ is NOT in $T_{ba}$ and:
		\begin{enumerate}
			\item If $c$ is NOT in $P_{ab}$ then $K\geq 4.53$ by Lemma~\ref{case3}.
			\item Else $c$ is in $P_{ab}$ and:
			\begin{itemize}
					\item If $d$ is right of $am$ then $K\geq 3.24$ by Lemma~\ref{case5}.
					\item If $d$ is left of $am$ and above $c$ then $K\geq 4.53$ by  Lemma~\ref{case6}
					\item If \(d\) is left of $am$ and below \(c\) (i.e. \(d \not\in T_{ab}\) such that $bd$ and $ac$ cross)
					\begin{itemize}
						\item If $d$ is NOT in $P_{ab}$ then $K\geq 5.70$ by  Lemma~\ref{case7}.
						\item If $d$ is in $P_{ab}$ then \(K \geq 6.16\) by  Lemma~\ref{case8} 
						 or \(K \geq 5.70\) by Lemma~\ref{newandimproved}.
					\end{itemize}
				\end{itemize}
		\end{enumerate}
	\end{enumerate}
\end{enumerate}
\caption{}%\label{algo1}
\end{algorithm}
\vspace{0.5cm}
One can check that all locations of \(c\) and \(d\) are covered.
This proves our main theorem:

We use the remainder of the paper to prove each lemma.

\begin{figure}
	\centering
	\begin{minipage}{0.45\textwidth}
		\centering
		\includegraphics[page = 4,scale=.9]{figures/generalcase.pdf}
		\caption{Points $(a,r,m)$ correspond to $\Tr$ (in blue)
		with $\qt = b$ and $\qw = c$.}%
		\label{crightab}
	\end{minipage}%
	\hfill
	\begin{minipage}{0.45\textwidth}
		\centering
		\includegraphics[page = 7,scale=.9]{figures/generalcase.pdf}
		\caption{Points
			$(b,m',\ell')$ correspond to $\Tr$ (in blue) with $\qt=a$ and
			$\qw=d$.}%
		\label{fig:t253}
	\end{minipage}
\end{figure}
\newpage
\begin{lemma}\label{case1}
	If $c$ is right of $ab$, then $|\pathF{a,b}|\leq K|ab|$ for $K\geq 4.53$.
\end{lemma}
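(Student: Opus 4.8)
The plan is to directly apply Lemma~\ref{t253} (the $\Tr$ lemma) with the correspondence $(s,v,u) \leftrightarrow (a,r,m)$, $\qt \leftrightarrow b$, and $\qw \leftrightarrow c$, as suggested by Figure~\ref{crightab}. First I would verify that the triangle $\triangle arm$ has the required angles $(\frac{\pi}{5},\frac{\pi}{2},\frac{3\pi}{10})$: the apex angle at $a$ is half the cone angle, i.e. $\frac{1}{2}\cdot\frac{2\pi}{5} = \frac{\pi}{5}$; the angle at $m$ is $\frac{\pi}{2}$ since $\ell r$ is perpendicular to the bisector of $C_2^a$ and $am$ lies along that bisector; hence the angle at $r$ is $\frac{3\pi}{10}$. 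So $\triangle arm$ is exactly a copy of $\Tr$ with $s=a$, $v=m$, $u=r$ (note $uv = mr$, the right half of $\ell r$).

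Next I would check the hypotheses on the points $t$ and $w$. Since $a \in C_4^b$ we showed $b$ is to the right of $m$, so $b$ lies on the segment $mr$; thus $\qt = b$ is a point on $uv$, as required. Since $c$ is the closest point to $a$ in $C_2^a$, $c$ lies in $T_{ab}$, and by hypothesis of this lemma $c$ is right of $ab$; combined with $c$ being in the cone, $c$ lies inside $\triangle amr = \triangle stu$ (the right half $T_{ab}^r$ of the canonical triangle). Hence $\qw = c$ is a point inside $\triangle stu$, as required. I should also note that if $c$ lies exactly on a boundary edge, a limiting/closure argument or Lemma~\ref{trianglelemma} handles it, but this is routine.

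Then Lemma~\ref{t253} gives immediately $|ac| + K|cb| \le K|ab|$ for all $K \ge 4.53$, which is inequality~\ref{A:unos}. By the discussion preceding this section (combining~\ref{path:unos} with~\ref{A:unos} as in case~\ref{unos}), this inequality forces $|cb| < |ab|$, so the induction hypothesis applies to the pair $(c,b)$, yielding $|\pathF{c,b}| \le K|cb|$, and therefore
\begin{displaymath}
|\pathF{a,b}| \le |ac| + |\pathF{c,b}| \le |ac| + K|cb| \le K|ab|
\end{displaymath}
for all $K \ge 4.53$. The only real content is the geometric bookkeeping of the previous paragraph; I do not anticipate a genuine obstacle, since this lemma is essentially a packaging of Lemma~\ref{t253} into the notation of the main argument. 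The mildest subtlety is making sure the orientation conventions (which half of $T_{ab}$, whether $b$ can equal $r$ or $m$) are stated cleanly so that the containment $c \in \triangle amr$ is unambiguous.
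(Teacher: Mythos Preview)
Your proposal is correct and follows exactly the paper's approach: apply Lemma~\ref{t253} with $(s,t,w,u,v)=(a,b,c,r,m)$ and then invoke the induction hypothesis. Two tiny slips worth cleaning up: your opening correspondence ``$(s,v,u)\leftrightarrow(a,r,m)$'' reverses $u$ and $v$ (you correctly fix this two lines later to $s=a$, $v=m$, $u=r$), and the triangle $\triangle stu$ is $\triangle abr$, not $\triangle amr$---this is what you actually need and what ``right of $ab$ and inside $T_{ab}$'' gives you directly.
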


\begin{proof}
	(Figures~\ref{examplet5s:sketch},~\ref{crightab})
	Let $(\qs,\qt,\qw, \qu,\qv) = (a,b,c, r, m)$, thus
	these points correspond to triangle $\Tr$ of Lemma~\ref{t253}.
	Thus $|ac|+K|cb|\leq K|ab|$ for all $K\geq 4.53$.
	The induction hypothesis and Lemma~\ref{t253} imply that there is a path
	from $a$ to $b$ with length at most
	\begin{displaymath}
	|\pathF{a,b}| \leq |ac|+|\pathF{c,b}|\leq |ac| + K|cb|\leq K|ab|.
	\end{displaymath}
\end{proof}

\begin{lemma}\label{case4}
	If $d$ is right of $ab$, then $|\pathF{a,b}|\leq K|ab|$ for $K\geq 4.53$.
\end{lemma}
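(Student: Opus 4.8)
The plan is to mirror the proof of Lemma~\ref{case1}, exchanging the roles of $a$ and $b$ (and of $c$ and $d$): instead of bounding path~\ref{path:unos} via inequality~\ref{A:unos}, I will bound path~\ref{path:dos} via inequality~\ref{B:dos}. First I would set $(\qs,\qt,\qw,\qu,\qv) = (b,a,d,\ell',m')$ and check that $\triangle\qs\qv\qu = \triangle b m'\ell'$ is a copy of the triangle $\Tr$ of Lemma~\ref{t253}: the angle at $b$ is $\angle\ell' b m' = \frac{\pi}{5}$ (half the cone angle of $C_4^b$), the angle at $m'$ is $\frac{\pi}{2}$ (as $m'$ is the foot of the perpendicular from $b$ onto the base $\ell'r'$), and hence the angle at $\ell'$ is $\frac{3\pi}{10}$. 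This is exactly the configuration drawn in Figure~\ref{fig:t253}.

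Next I would verify the two incidence conditions required by Lemma~\ref{t253}. For $\qt = a$: the preliminary discussion shows $\alpha' = \angle abm' = \frac{\pi}{5}-\alpha \ge \frac{\pi}{10}$ since $\alpha\le\frac{\pi}{10}$, so $a$ lies on the segment $\ell'm'$, i.e., on $\qu\qv$. For $\qw = d$: since $d$ is the point of $P$ closest to $b$ in $C_4^b$ it lies in $T_{ba}$, and the line $ab$ splits $T_{ba} = \triangle b\ell' a \cup \triangle b a r'$; one checks that $\ell'$ lies to the right of $ab$, so the hypothesis that $d$ is right of $ab$ places $d \in \triangle b a \ell' = \triangle\qs\qt\qu$. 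Lemma~\ref{t253} then yields $|bd| + K|da| \le K|ba|$ for all $K\ge 4.53$. Since this forces $|da| < |ab|$, the induction hypothesis applies to the pair $(d,a)$, and
\begin{displaymath}
	|\pathF{a,b}| = |\pathF{b,a}| \le |bd| + |\pathF{d,a}| \le |bd| + K|da| \le K|ba| = K|ab|.
\end{displaymath}

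The only delicate point is the orientation bookkeeping in the second step: one must be sure that ``$d$ is right of line $ab$'' coincides with ``$d$ lies in the sub-triangle $\triangle b a \ell'$ of $T_{ba}$'', bearing in mind that the labels $\ell'$ and $r'$ are assigned from $b$'s viewpoint facing $a$ rather than with respect to line $ab$. A brief coordinate computation (place $a$ at the origin with the bisector of $C_2^a$ vertical, so that by Observation~\ref{obs} the segment $ab$ makes a clockwise angle $\alpha\in[0,\frac{\pi}{10}]$ with the vertical) confirms that $\ell'$ is on the right side of $ab$ and rules out any sign error; once that is settled, the argument is a verbatim transcription of Lemma~\ref{case1}.
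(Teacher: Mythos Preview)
Your proof is correct and follows exactly the paper's approach: apply Lemma~\ref{t253} to the triangle $\triangle b m' \ell'$ with $t=a$ and $w=d$, then invoke the induction hypothesis. Your assignment $(\qu,\qv)=(\ell',m')$ in fact matches the figure caption and the angle conventions of Lemma~\ref{t253} (angle $\frac{\pi}{2}$ at $v=m'$, angle $\frac{3\pi}{10}$ at $u=\ell'$), so your more careful verification of the incidence conditions is a welcome addition.
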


\begin{proof}
	(Figures~\ref{examplet5s:sketch},~\ref{fig:t253})
	Let $(\qs,\qt,\qw,\qu,\qv) = (b,a, d, m',\ell')$, thus these points
	correspond to triangle $\Tr$ from Lemma~\ref{t253}.
	Thus $|bd| + K|da| \leq K|ab|$ for $K\geq
	4.53$ by Lemma~\ref{t253}.
	The induction hypothesis and Lemma~\ref{t253} imply that there
	is a path from $a$ to $b$ with length at most
	\begin{displaymath}
	|\pathF{a,b}| \leq |bd|+|\pathF{d,a}|\leq |bd| + K|da|\leq K|ab|. 
	\end{displaymath}
\end{proof}

\begin{lemma}\label{case2}
	If $c$ is left of $ab$ and in $T_{ab}\cap T_{ba}$, then $|\pathF{a,b}|\leq K|ab|$ for $K\geq 5.70$.
\end{lemma}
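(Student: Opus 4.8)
## Proof proposal for Lemma~\ref{case2}

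The plan is to reduce this to Lemma~\ref{t334} by exhibiting a triangle congruent to $\Tc$ that contains $c$ and has the three relevant points in the right positions. Recall the hypothesis: $c$ is left of $ab$ and $c \in T_{ab} \cap T_{ba}$. The path we want to bound is path~\ref{path:unos}, namely $ac + \pathF{c,b}$, so by the discussion preceding the lemma it suffices to establish inequality~\ref{A:unos}: $|ac| + K|cb| \leq K|ab|$ for $K \geq 5.70$. Comparing with the conclusion of Lemma~\ref{t334}, $|sw| + K|wt| \leq K|st|$, the natural assignment is $(s,t,w) = (a,b,c)$, with $u,v$ chosen so that $\triangle suv$ is (a scaled copy of) $\Tc$, i.e.\ has angles $(\tfrac{3\pi}{10},\tfrac{3\pi}{10},\tfrac{2\pi}{5})$ at $(s,v,u)$, with $t$ on $uv$, $\angle vst \leq \pi/10$, and $w \in \triangle stu$.

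The key geometric step is therefore to build this $\Tc$ around $a$ and verify the three containments. Here $v$ should be the apex of the cone-like wedge at $a$ on the side of $ab$ where $c$ lies, and $u$ the opposite vertex; concretely, take $v$ to be the vertex $\ell$ of $T_{ab}$ (the leftmost vertex, since $c$ is left of $ab$) and pick $u$ on the appropriate ray so that $\angle uav = \tfrac{2\pi}{5}$ and $\angle avu = \tfrac{3\pi}{10}$, then scale so that $b \in uv$. I expect that the geometry works out as follows: the angle $\angle vab = \angle \ell ab = \alpha \leq \pi/10$ by Observation~\ref{obs}, which is exactly the condition $\angle vst \leq \pi/10$ needed in Lemma~\ref{t334}. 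The point $b$ lies on segment $uv$ because $uv$ is parallel to the far side of the canonical triangle $T_{ab}$ through $b$ (one must check the scaling places $b$ between $u$ and $v$ and not outside — this uses $b$ being right of $m$, hence in the correct half). Finally, $c \in \triangle stu = \triangle abu$: since $c \in T_{ab}$ and $c$ is left of $ab$, $c$ lies in the left half-triangle $T_{ab}^\ell$; one checks $T_{ab}^\ell \subseteq \triangle abu$ (the wedge at $a$ of angle $\tfrac{2\pi}{5}$ on the left of $ab$ is at least as wide as the left half-cone of angle $\tfrac{\pi}{5}$), and the constraint $c \in T_{ba}$ together with the perpendicular-to-bisector side of $T_{ab}$ keeps $c$ below the line $ub$ — this last inclusion is where the hypothesis $c \in T_{ba}$ (as opposed to merely $c \in T_{ab}$) gets used, bounding how far from $a$ along the $ab$ direction $c$ can be.

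Once the containments $c \in \triangle abu$, $b \in uv$, $\angle \ell ab \leq \pi/10$, and $\triangle avu \cong \Tc$ are verified, Lemma~\ref{t334} immediately gives $|ac| + K|cb| \leq K|ab|$ for all $K \geq 5.70$, and then the induction hypothesis applied to the pair $(c,b)$ (legitimate since the inequality forces $|cb| < |ab|$) yields
\begin{displaymath}
	|\pathF{a,b}| \leq |ac| + |\pathF{c,b}| \leq |ac| + K|cb| \leq K|ab|.
\end{displaymath}
The main obstacle is the second geometric step: pinning down exactly which auxiliary vertex $u$ makes $\triangle avu$ congruent to $\Tc$ while simultaneously containing all of $T_{ab}^\ell$ restricted to $T_{ba}$, and confirming that $b$ falls on the segment $uv$ rather than its extension. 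I would handle this by drawing the picture with $\alpha$ at its extreme value $\pi/10$ (or $\alpha=0$) and checking the inclusion there, arguing monotonicity in $\alpha$ as in the proof of Claim~\ref{claim:pentagon}; the angle bookkeeping — that the left cone half-angle $\tfrac{\pi}{5}$ at $a$ sits inside the $\tfrac{3\pi}{10}$ apex angle of $\Tc$ — is the crux, and it is exactly the slack that lets the same $\Tc$ absorb the worst case.
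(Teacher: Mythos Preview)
Your overall strategy---reduce to Lemma~\ref{t334} with $(s,t,w)=(a,b,c)$ and then invoke the induction hypothesis---is exactly what the paper does. But your construction of the enclosing $\Tc$ is mis-oriented, and this is not a cosmetic slip.

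In Lemma~\ref{t334} the point $w$ must lie in $\triangle stu$, i.e.\ on the $u$-side of the line $st=ab$. Since $c$ is \emph{left} of $ab$, the vertex $u$ must be on the left and $v$ on the right. You take $v=\ell$, which is on the left, so with your triangle $c$ lands in $\triangle stv$ rather than $\triangle stu$ and the lemma does not apply. Relatedly, your claim that $\angle \ell ab=\alpha$ is wrong: by definition $\alpha=\angle mab$ with $b$ right of $m$, so $\angle \ell ab=\tfrac{\pi}{5}+\alpha\geq\tfrac{\pi}{5}$, well above the $\tfrac{\pi}{10}$ bound required of $\angle vst$. (Also, the apex angle at $s=a$ in $\Tc$ is $\tfrac{3\pi}{10}$, not $\tfrac{2\pi}{5}$.)

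The paper's construction places $u$ on the ray $a\ell$ and $v$ on the ray $ar_m$: take $p$ to be the intersection of $a\ell$ with the line $br'$, and $q$ the intersection of $ar_m$ with that same line. Then $(s,u,v)=(a,p,q)$ has apex angle $\angle \ell a r_m=\tfrac{3\pi}{10}$, the small angle is $\angle vst=\angle r_mab=\tfrac{\pi}{10}-\alpha\in[0,\tfrac{\pi}{10}]$, and $b$ lies on $pq$ by construction. The hypothesis $c\in T_{ba}$ is exactly what puts $c$ on the correct side of the line $br'=uv$, hence $c\in\triangle abp=\triangle stu$. With this corrected triangle your final paragraph goes through verbatim.
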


\begin{figure}
	\centering
	\begin{minipage}{0.45\textwidth}
		\centering
		\includegraphics[page = 6,scale=.9]{figures/generalcase.pdf}
		\caption{Points $(a,q,p)$ correspond to the triangle $\Tc$ with angles
		$(\frac{3\pi}{10},\frac{2\pi}{5},\frac{3\pi}{10})$ as denoted by the
		blue triangle. Let $\qt = b$ and $\qw = c$, and $\theta =
		\frac{\pi}{10}-\alpha$, which falls in the range of $0\leq \angle \qv\qs\qu
		\leq \pi/10$.}%
		\label{cintersect}
	\end{minipage}%
	\hfill
	\begin{minipage}{0.45\textwidth}
		\centering
		\includegraphics[page=11,scale=.7]{figures/generalcase.pdf}
		\caption{We use the fact that $p_4$ lies in $T_{ab}$ and apply $\Tc$.}%
		\label{examplet59}
	\end{minipage}
\end{figure}

\begin{proof}
	(Figures~\ref{examplet5s:sketch},~\ref{cintersect})
	Let $p$ be the intersection of $br'$ and $a\ell$, and let $q$ be the
	intersection of the lines through $r'b$ and $ar_m$. Observe that $0\leq
	\angle r_mab \leq \pi/10$, thus $\angle r_mab$ has the same range as
	$\angle \qv\qs\qt$ from $\Tc$ in Lemma \ref{t334}. If we let points
	$(\qs,\qt,\qw,\qu, \qv) = (a,b,c,p,q)$, then these points correspond to the
	triangle $\Tc$, and thus $|ac|+K|cb|\leq K|ab|$ for $K\geq 5.70$ by Lemma
	\ref{t334}. Our induction hypothesis and Lemma \ref{t334}
	imply that there is a path from $a$ to $b$ with length
	\begin{displaymath}
	|\pathF{a,b}|\leq |ac|+|\pathF{c,b}|\leq |ac| + K|cb|\leq K|ab|. 
	\end{displaymath}
\end{proof}

\begin{lemma}\label{case3}
	If $c \in T_{ab} \setminus (T_{ba} \cup P_{ab})$,
	then $|\pathF{a,b}|\leq K|ab|$ for all $K \geq 4.53$.
\end{lemma}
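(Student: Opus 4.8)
The plan is to reduce Lemma~\ref{case3} to Lemma~\ref{t253} by exhibiting a triangle congruent to (or containing a copy of) \(\Tr\) that has \(a\) at the right-angle-opposite vertex \(s\), \(b\) playing the role of \(t\), and \(c\) playing the role of \(w\). Since \(c\) is left of \(ab\), lies in \(T_{ab}\), but is outside both \(T_{ba}\) and \(P_{ab}\), the geometric content is that being outside \(P_{ab}\) forces \(c\) to lie close enough to \(a\) — or more precisely, inside a sub-triangle of \(T_{ab}\) of the shape \((\tfrac{\pi}{5},\tfrac{\pi}{2},\tfrac{3\pi}{10})\) — that inequality~\ref{A:unos} holds with \(K \geq 4.53\). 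The target conclusion is then immediate: by Lemma~\ref{t253}, \(|ac| + K|cb| \leq K|ab|\), and combining with the induction hypothesis on \(|cb| < |ab|\) as in~\ref{unos} gives \(|\pathF{a,b}| \leq |ac| + |\pathF{c,b}| \leq |ac| + K|cb| \leq K|ab|\).

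Concretely, I would first apply Transformation~\ref{transformt5} to assume \(\alpha = \pi/10\); Lemma~\ref{posofc} justifies this for all \(K \geq 3.24\), and \(4.53 > 3.24\), so no generality is lost. With \(\alpha = \pi/10\), the pentagon \(P_{ab}\) is in its extreme position (Figure~\ref{p52}), and by Claim~\ref{claim:pentagon} the vertex \(p_4\) sits inside \(T_{ab}\) with \(p_0\) on \(\ell b\). The region \(T_{ab} \setminus (T_{ba} \cup P_{ab})\) to the left of \(ab\) is then a small triangle near \(a\); I would identify its three bounding lines — the left side \(a\ell\) of \(T_{ab}\), the segment \(ab\) itself, and the side of \(P_{ab}\) separating it from \(a\) (the edge through \(p_0\) and \(p_4\), equivalently \(p_0 p_1\) extended) — and check that the angles of this triangle at \(a\) and along \(ab\) are at most those of \(\Tr\). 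The key measurements: the angle of \(T_{ab}\) at \(a\) is \(\tfrac{\pi}{5}\), and since \(c\) is left of \(ab\) with \(b \in C_2^a\), the angle \(\angle(\ell a b)\) on the left of \(ab\) is exactly \(\alpha + \tfrac{\pi}{10}\); with \(\alpha = \pi/10\) this is \(\tfrac{\pi}{5}\), matching the \(\tfrac{\pi}{5}\) angle of \(\Tr\) at \(s\). I then need the ``\(c\) outside \(P_{ab}\)'' condition to cut this wedge off at a line perpendicular to \(ab\) — or at least at a line making the right-triangle shape work — so that the relevant containing triangle has its right angle positioned as in \(\Tr\).

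The main obstacle I anticipate is verifying that the excluded region \(T_{ab} \setminus (T_{ba} \cup P_{ab})\) on the left side of \(ab\) is actually contained in a triangle similar to \(\Tr\) with the correct vertex-to-role correspondence — that is, checking the boundary of \(P_{ab}\) passes through or beyond the point where a perpendicular-to-\(ab\) line from the appropriate location would fall, so that Lemma~\ref{t253}'s hypothesis ``\(w\) inside \(\triangle stu\) with \(t\) on \(uv\)'' is genuinely met. This amounts to one or two angle-chase / law-of-sines computations using the already-established facts that \(p_0 \in \ell b\), \(p_1\) is on line \(\ell b\), and \(p_4 \in T_{ab}\) (Claim~\ref{claim:pentagon}), together with \(\angle bam \leq \pi/10\) (Observation~\ref{obs}). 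Once the containment is pinned down, the rest is a direct citation of Lemma~\ref{t253} and the inductive combination~\ref{unos}. I would close by remarking that since \(4.53 \geq 3.24\), the earlier use of Transformation~\ref{transformt5} is consistent, so the bound \(K \geq 4.53\) stands for this case.
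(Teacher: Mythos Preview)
Your plan is the paper's plan: apply Transformation~\ref{transformt5} to reduce to \(\alpha=\pi/10\), then feed the configuration into Lemma~\ref{t253} with \((s,t,w,u,v)=(a,b,c,q,b)\), where \(q\) is the intersection of the perpendicular to \(ab\) at \(b\) with the line \(a\ell_m\), and conclude via~\ref{unos}. So the strategy is correct and identical to the paper's.

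There is, however, a slip in your angle bookkeeping that would derail the verification step if left uncorrected. You write \(\angle \ell a b = \alpha + \pi/10\), but in fact \(\angle \ell a b = \angle \ell a m + \angle m a b = \pi/5 + \alpha\), which is \(3\pi/10\) when \(\alpha=\pi/10\). The triangle with one side along \(a\ell\) therefore does \emph{not} have the \(\Tr\) shape. The correct left side of the \(\Tr\) triangle is \(a\ell_m\) (giving \(\angle \ell_m a b = \pi/10+\alpha=\pi/5\)), and the genuine content of the containment check is that \(c\) lies on the \(b\)-side of the line \(a\ell_m\). After the transformation one finds \(p_0=\ell_m\), \(p_1=\ell\), and \(p_4\in a\ell_m\); from this the wedge \(\triangle a\ell\ell_m\) is entirely covered by \(T_{ba}\cup P_{ab}\) --- note that \emph{both} exclusions are used here, not just \(c\notin P_{ab}\) --- so the hypothesis forces \(c\in\triangle ab\ell_m\subset\triangle abq\). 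The paper glosses over this containment, so your instinct that it is the crux is exactly right; just be sure you are bounding the angle by the line \(a\ell_m\), not \(a\ell\).
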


\begin{proof}
	(Figures~\ref{examplet5s:sketch},~\ref{p52})
	Let $\Phi = |ac|+K|cb|- K|ab|$.
	We apply Transformation \ref{transformt5}. Since $c \not\in T_{ba}$ it
	must be left of $\ell_m'b$, thus $c$ remains left of $ab$. As $a$ moves
	left along $\ell'\ell'_m$, so does the left side of $T_{ab}$, which means that $c$ remains
	inside $T_{ab}$. Thus Lemma \ref{posofc} implies that $\Phi$ is maximized
	at $\alpha = \pi/10$, thus we assume this is the case. Observe that $\angle
	ba\ell_m = \pi/5$, and $\angle \ell_mba = 2\pi/5 <\pi/2$. Let $q$ be the
	intersection of the line through $b$ orthogonal to $ab$ and the line
	through $a$ and $\ell_m$. If we let $(\qs,\qt,\qw,\qu,\qv) = (a,b,c,q,b)$
	then these points correspond to $\Tr$. Then Lemma
	\ref{t253} tells us that $|ac|+K|cb|\leq K|ab|$ and thus
	$\Phi=|ac|+K|cb|- K|ab| \leq 0$ for all $K \geq 4.53$.
\end{proof}

\begin{figure}
	\centering
	\begin{minipage}{0.45\textwidth}
		\centering
		\includegraphics[page=12,scale=.7]{figures/generalcase.pdf}
		\caption{The point $c$ is in $P_{ab} \setminus T_{ba}$, and $d$ is right of $am$.}%
		\label{examplet510}
	\end{minipage}%
	\hfill
	\begin{minipage}{0.45\textwidth}
		\centering
		\includegraphics[page=13,scale=.7]{figures/generalcase.pdf}
		\caption{The segments $ac$ and $bd$ cross and \(c\) and \(d\) are in $P_{ab}$.}%
		\label{acbdcross}
	\end{minipage}
\end{figure}

\begin{lemma}\label{case5}
	If $d$ is left of $ab$ and right of $am$, then $|\pathF{a,b}|\leq K|ab|$
	for $K\geq 3.24$.
\end{lemma}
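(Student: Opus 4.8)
The plan is to use path~\ref{dos}, namely $bd + \pathF{d,a}$, and establish inequality~\ref{B:dos}: $|bd| + K|da| \leq K|ab|$ for $K \geq 3.24$. The hypothesis places $d$ to the left of $ab$ and to the right of the segment $am$ (equivalently, $d$ lies in the sub-triangle of $T_{ba}$ cut off between $ab$ and the bisector-like segment $am$). First I would set up coordinates as in the previous lemmas, orienting things so that the relevant triangle $\triangle bma$ (or the triangle bounded by $ab$, by $am$, and by the appropriate side of $T_{ba}$) is in standard position, and I would note that by Observation~\ref{obs} we have $0 \leq \alpha \leq \pi/10$, so $\angle abm' $ and the angle $\angle mab = \alpha'$ are controlled. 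The key geometric fact to extract is that, because $d$ lies on the $a$-side of $ab$ and on the $b$-side of $am$, the triangle containing $d$ has its apex at $b$, one side along $ba$, and the opposite vertex at $m$ (or at a point on $T_{ba}$'s boundary), and all of its angles are bounded away from the bad regime.

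The main tool will be Transformation~\ref{transformt5} together with Lemma~\ref{posofc}: since $\Psi = |bd| - K|ab|$ plays the role of $|ac| - K|ab|$ with the roles of the two canonical triangles swapped — wait, more precisely, I would check that Transformation~\ref{transformt5} fixes $|bd|$ (it changes only $|ac|$, $|ab|$, $|ad|$ by the remark following the transformation) and that $|bd| - K|ab|$ is therefore maximized, as a function of $\alpha$, exactly when $|ab|$ is minimized, i.e. at $\alpha = \pi/10$, provided $K \geq 3.24 = 1/\cos(2\pi/5)$. So after invoking Lemma~\ref{posofc} (or an identical one-line computation bounding $d(|bd|-K|ab|)/d\gamma$), I may assume $\alpha = \pi/10$. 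In that extreme position the triangle $\triangle bma$ has explicit angles, and $d$ lies inside the region bounded by $ab$, $am$, and the boundary of $T_{ba}$. One then shows that this region is contained in a triangle $\Tr$ of Lemma~\ref{t253} with apex $s = b$, $t = a$, and $w = d$: the angle $\angle abm'$ at $b$ between $ba$ and the side of $T_{ba}$ is at most $\pi/5$, the angle at $a$ is a right angle (the line through $a$ perpendicular to $ab$), so Lemma~\ref{t253} gives $|bd| + K|da| \leq K|ab|$ for all $K \geq 4.53$ — but actually the constraint $d$ right of $am$ cuts the relevant angle down to at most $\pi/10$, which should let the weaker bound $K \geq 3.24$ suffice via a direct sine-law estimate $|bd| \leq |ba|\sin(\pi/10 + \text{something})$ and $|da| \leq \ldots$, exactly mirroring the normalization in the proof of Lemma~\ref{t253}/\ref{t334}.

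The step I expect to be the main obstacle is pinning down precisely which triangle $d$ is confined to and verifying its angles give the clean constant $1/\cos(2\pi/5) = 3.24\ldots$ rather than $4.53$ — i.e.\ confirming that "right of $am$" is exactly the hypothesis needed to shrink the apex angle at $b$ to $\pi/10$ (note $\angle mab = \alpha'= \pi/5 - \alpha \geq \pi/10$, and $\angle bam = \alpha \leq \pi/10$, so measuring from $ab$ the half-cone at $a$ toward $m$ spans $\alpha \leq \pi/10$; the region right of $am$ seen from $b$ subtends a correspondingly small angle). Once the triangle is identified, the inequality $|bd| + K|da| \leq K|ab|$ reduces, after normalizing $|ab| = 1$ and projecting $d$ orthogonally onto $ab$ and onto $am$ as in Lemma~\ref{t253}, to $\sin\theta + K\cos\theta \cdot(\text{stuff}) \leq K$ type bounds that are satisfied for $K \geq 1/\cos(2\pi/5)$; I would present this as a short computation rather than grinding it, citing Lemma~\ref{posofc} for the reduction to $\alpha = \pi/10$ and Lemma~\ref{t253} (applied to a sub-triangle) for the projection argument. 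Combining with the induction hypothesis via~\ref{dos} then yields $|\pathF{a,b}| \leq |bd| + |\pathF{d,a}| \leq |bd| + K|da| \leq K|ab|$ for all $K \geq 3.24$.
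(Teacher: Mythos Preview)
Your proposal has a genuine gap at its core step. You plan to apply Transformation~\ref{transformt5} and invoke Lemma~\ref{posofc} to reduce to $\alpha=\pi/10$, but the paper warns explicitly (right after the statement of the transformation) that it changes $|ad|$, and that it is therefore never used in a case that depends on $|ad|$. Your target inequality is $|bd|+K|da|\le K|ab|$, which does depend on $|da|$; maximizing only the piece $|bd|-K|ab|$ tells you nothing about the full quantity once $|da|$ is moving too. Worse, the transformation moves $a$ and hence moves the segment $am$, so the hypothesis ``$d$ is right of $am$'' need not survive the reduction. After that, your attempt to squeeze the configuration into $\Tr$ of Lemma~\ref{t253} would at best yield $K\ge 4.53$, and your speculation that the ``right of $am$'' condition shrinks the apex angle to $\pi/10$ is not substantiated; you end up not identifying any triangle to which a lemma applies.

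The paper's proof avoids all of this machinery. It works directly for arbitrary $\alpha\in[0,\pi/10]$: let $d'$ be the horizontal projection of $d$ onto $ab$ and split $\Phi=|bd|+K|da|-K|ab|$ as $\Phi_1+\Phi_2$ with $\Phi_1=|bd|-K|bd'|$ and $\Phi_2=K|da|-K|d'a|$. The hypothesis ``$d$ right of $am$'' (note $am$ is vertical) forces $\angle d'da>\pi/2$, hence $|d'a|>|da|$ and $\Phi_2\le 0$ for all $K\ge 1$. The hypothesis $d\in T_{ba}$ forces $\gamma=\angle d'db\ge\pi/10$, whence $|bd|/|bd'|\le 1/\sin\gamma\le 1/\sin(\pi/10)=3.23\ldots$, giving $\Phi_1\le 0$ for $K\ge 3.24$. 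No transformation, no appeal to $\Tr$ or $\Tc$; the constant $3.24=1/\sin(\pi/10)=1/\cos(2\pi/5)$ drops out of this elementary split.
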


\begin{proof}
	(Figure~\ref{examplet510})
	We show $\Phi = |bd| + K|da| - K|ab| \leq 0$, which implies
	$|\pathF{a,b}| \leq  |bd| + |\pathF{d,a}| \leq K |ab|$ by the triangle
	inequality and the induction hypothesis.

	Let $d'$ be the horizontal projection of $d$ onto $ab$. Let $\Phi_1
	= |bd|-K|bd'|$ and $\Phi_2 = K|da|-K|d'a|$, and note that $\Phi =
	\Phi_1+\Phi_2$ since \(d' \in ab\). Thus it is sufficient to show that $\Phi_1\leq0$ and
	$\Phi_2\leq 0$.

	Observe that $\angle d'da >\pi/2$, since $d$ is right of
	$am$, thus $|d'a|>|da|$, and $\Phi_2\leq0$ for all $K\geq 1$.
	For $\Phi_1\leq 0$ we need $K\geq \frac{|bd|}{|bd'|}$. Let $\gamma =\angle d'db$ and note that $\gamma \geq \pi/10$ because
	\(d \in T_{ba}\). Let $d_y(b,d')$ be the vertical distance between $b$ and $d'$. We have $\sin\gamma = \frac{d_y(b,d')}{|bd|}$. Observe that  $d_y(b,d')\leq |bd'|$ and thus $\frac{|bd|}{|bd'|}\leq \frac{|bd|}{d_y(b,d')}=\frac{1}{\sin\gamma}\leq \frac{1}{\sin(\pi/10)}$.
	Thus $K\geq\frac{1}{\sin(\pi/10)}\geq  \frac{|bd|}{d_y(b,d')}$, and $K\geq \frac{1}{\sin(\pi/10)} = 3.23\dots$
	is sufficient.
\end{proof}
\begin{figure}
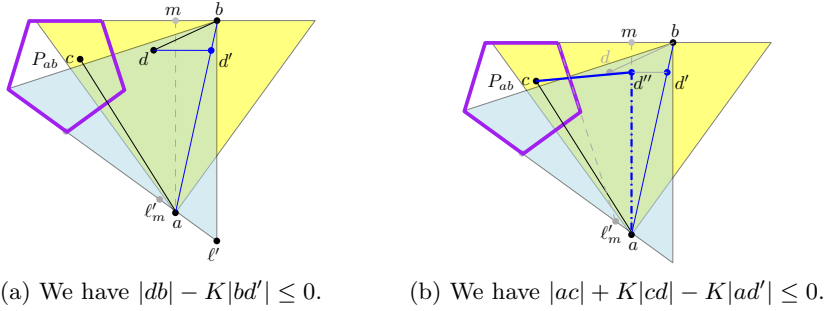

	\begin{subfigure}[b]{0.5\textwidth}
		\centering
		\includegraphics[page=16,scale=.7]{figures/generalcase.pdf}
		\caption{We have $|db|-K|bd'|\leq0$.}%
		\label{dleftam}
	\end{subfigure}%
	\begin{subfigure}[b]{0.5\textwidth}
		\centering
		\includegraphics[page=17,scale=.7]{figures/generalcase.pdf}
		\caption{We have $|ac|+K|cd|-K|ad'|\leq0$.}%
		\label{dleftam2}
	\end{subfigure}
	\caption{The point $c$ is in $P_{ab} \setminus T_{ba}$, and $d$ is left of $am$ but above $c$.}%
	\label{examplet511}
\end{figure}

\begin{lemma}\label{case6}
	If $c$ is in $P_{ab} \setminus T_{ba}$, and $d$ is left of $am$ but above
	$c$, then $|\pathF{a,b}|\leq K|ab|$ for all $K \geq 4.53$.
\end{lemma}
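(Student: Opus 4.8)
The plan is to mimic the structure of the proof of Lemma~\ref{case5}, splitting the quantity we want to bound into two independent pieces, one handled by a simple angle argument and the other by one of the generic triangle lemmas. Here, however, since $d$ lies \emph{above} $c$ rather than to the right of $am$, the path $bd + \pathF{d,a}$ is no longer obviously short on its own; instead we route through $c$ as well and work with path~\ref{ttres}, i.e. we aim to establish inequality~\ref{C:ttres}: $|ac| + K|cd| + |db| \leq K|ab|$. To decouple the two halves of this path I would introduce $d'$, the horizontal (or cone-bisector–direction) projection of $d$ onto $ab$, and split $\Phi = |ac| + K|cd| + |db| - K|ab|$ as $\Phi_1 + \Phi_2$ where $\Phi_1 = |db| - K|bd'|$ and $\Phi_2 = |ac| + K|cd| - K|ad'|$, using that $d' \in ab$ so that $K|ab| = K|ad'| + K|bd'|$. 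It then suffices to show $\Phi_1 \leq 0$ and $\Phi_2 \leq 0$ separately, exactly as in Lemma~\ref{case5} (cf. Figure~\ref{examplet511}).

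For $\Phi_1 \leq 0$: since $d$ is left of $am$, the segment $bd$ makes an angle of at least $\pi/10$ with $ab$ on the correct side, so $|bd'| \geq |bd|\sin(\pi/10) \cdot(\text{something})$ — more precisely, letting $\gamma = \angle d'db$, we have (as in Lemma~\ref{case5}) $|bd|/|bd'| \leq 1/\sin\gamma$, and the constraint from $d \in T_{ba}$ together with $d$ being left of $am$ forces $\gamma \geq \pi/10$, giving $|bd|/|bd'| \leq 1/\sin(\pi/10) = 3.23\ldots \leq K$. (I would double-check that the "left of $am$" hypothesis, rather than "right of $am$", still yields $\gamma \geq \pi/10$ here; if not, the relevant bound on $\gamma$ should come from $d$ lying above $c$ combined with $c$ being left of $ab$, and I would adjust the angle bookkeeping accordingly — this is the one place I expect to have to be careful.)

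For $\Phi_2 \leq 0$: I want $|ac| + K|cd| \leq K|ad'|$, which is an instance of the generic inequality $|sw| + K|wt| \leq K|st|$ with $(s,t,w) = (a, d', c)$. The task is to exhibit a triangle of the shape $\Tr$ (angles $(\tfrac\pi5, \tfrac\pi2, \tfrac{3\pi}{10})$) or $\Tc$ with $t$ on the appropriate side and $w$ inside $\triangle stu$, so that Lemma~\ref{t253} (giving $K \geq 4.53$) or Lemma~\ref{t334} applies. Concretely: the ray $ad'$ lies along $ab$, and the cone $C_2^a$ has half-angle $\pi/5$, so $\angle(ad', \text{side of } C_2^a) \leq \pi/5$; taking the right triangle with apex $a$, a right angle at the foot on $ab$, and the $\pi/5$ angle at $a$ opening toward the $a\ell$ side gives a copy of $\Tr$ whose interior contains $c$ — we know $c$ is left of $ab$ and inside $T_{ab}$, hence inside $C_2^a$, and the "above $c$" placement of $d$ (hence of $d'$ relative to $c$) ensures $c$ falls in the sub-triangle $\triangle s t u$. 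Then Lemma~\ref{t253} yields $\Phi_2 \leq 0$ for all $K \geq 4.53$.

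The main obstacle I anticipate is the geometric verification that $c$ lies inside the correct sub-triangle for the $\Tr$ application — i.e. that $c$ is on the far side of the line $sd'$ from $v$ — which is where the hypotheses "$c \in P_{ab}\setminus T_{ba}$" and "$d$ above $c$, left of $am$" must be combined; in particular one may first need to apply Transformation~\ref{transformt5} (justified by Lemma~\ref{posofc}, valid since $K \geq 3.24$) to reduce to $\alpha = \pi/10$ and pin down the positions, as was done in Lemma~\ref{case3}. Everything else is routine trigonometry of the kind already carried out in Lemmas~\ref{t253}, \ref{t334}, and~\ref{case5}, and the final bound $K \geq 4.53$ is exactly the $\Tr$ constant.
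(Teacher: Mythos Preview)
Your overall architecture---route through path~\ref{ttres}, set $\Phi = |ac|+K|cd|+|db|-K|ab|$, project $d$ horizontally onto $ab$ at $d'$, and split $\Phi=\Phi_1+\Phi_2$---is exactly what the paper does, and your treatment of $\Phi_1$ (the bound $\gamma=\angle d'db\ge\pi/10$ coming solely from $d\in T_{ba}$, giving $K\ge 1/\sin(\pi/10)$) is also the paper's argument. Transformation~\ref{transformt5} is \emph{not} needed here; the paper works for general $\alpha$.

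The gap is in $\Phi_2$. You propose to apply Lemma~\ref{t253} directly with $(s,t,w)=(a,d',c)$ and a right angle placed on $ab$, relying on ``the cone $C_2^a$ has half-angle $\pi/5$''. But the relevant opening angle at $a$ is $\angle\ell ab=\tfrac{\pi}{5}+\alpha$, not $\tfrac{\pi}{5}$: since $b$ is to the right of $m$, the left cone wall $a\ell$ is as much as $3\pi/10$ away from $ab$, so $c$ (which can sit near $a\ell$) need not lie in any $\Tr$-triangle erected on the segment $ad'$. A second, related obstruction is that the hypotenuse $uv$ of your $\Tr$ would be perpendicular to $ab$, not horizontal, so ``$d$ above $c$'' does not by itself place $c$ on the $a$-side of that line.

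The paper's fix is a second projection: let $d''$ be the \emph{horizontal} projection of $d$ onto the (vertical) bisector $am$. Then $|ad''|\le|ad'|$ because $\angle ad''d'=\pi/2$, and---this is where the so-far-unused hypothesis $c\notin T_{ba}$ enters---one gets $\angle cdd''\ge 9\pi/10$, hence $|cd''|>|cd|$. Together these give $\Phi_2\le |ac|+K|cd''|-K|ad''|$, and \emph{now} $\Tr$ applies cleanly with $(s,v,u)=(a,d'',q)$ where $q$ is the horizontal projection of $d''$ onto $a\ell$: the apex angle is exactly $\angle m a\ell=\pi/5$, the far side $d''q$ is horizontal, and ``$d$ above $c$'' places $c$ inside $\triangle ad''q$. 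So the missing idea is the passage from $d'$ to $d''$ together with the use of $c\notin T_{ba}$ to control $|cd|$ versus $|cd''|$.
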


\begin{proof}
	(Figures~\ref{examplet5s:sketch},~\ref{examplet511})
	We show $\Phi = |ac| + K|cd| + |db| - K|ab| \leq 0$, which implies
	$|\pathF{a,b}| \leq |ac| + |\pathF{c,d}| + |db| \leq K |ab|$ by the triangle
	inequality and the induction hypothesis.
	We split $\Phi$ into two parts, and show that each part is less than $0$.
	Let $d'$ be the horizontal projection of $d$ onto $ab$. Let $\Phi_1 = |bd|
	- K|bd'|$, and let $\Phi_2 = |ac|+K|cd|-K|ad'|$. Observe that $\Phi =
	\Phi_1+\Phi_2$ since \(d' \in ab\).

	To show that $\Phi_1 \leq 0$,
	observe that $d_y(b,d) = d_y(b,d') \leq |bd'|$.
	Thus let $\Phi_1' = |bd| - K\cdot d_y(b,d) \geq \Phi_1$.
	Let $\gamma = \angle d'db$, and observe that $\Phi_1' =
	|bd|(1-K\sin\gamma)$.  Note that $\gamma \geq \pi/10$ since \(d \in
	T_{ba}\), and thus $K \geq 3.24$ is sufficient to have  $\Phi_1 \leq 0$.

	For $\Phi_2 \leq 0$,
	let $d''$ be the horizontal projection of $d$ onto $am$. Since
	$\angle ad''d' = \pi/2$, $|ad''|\leq |ad'|$. Since $c \not\in T_{ba}$,
	$\angle cdd'' \geq 9\pi/10$, thus $|cd''|>|cd|$. Let $\Phi_2' =
	|ac|+K|cd''|-K|ad''| \geq \Phi_2$. Let $q$ be
	the horizontal projection of $d''$ onto $a\ell$.
	Let the points $(\qs,\qt,\qw,\qu,\qv) = (a,d'',c,q,d'')$ and thus these
	points correspond to $\Tr$. Thus $|ac|+K|cd''|\leq K|ad''|$
	for all \(K \geq 4.53\) by Lemma~\ref{t253}. 
\end{proof}

\begin{lemma}\label{case7}
	If $d$ is left of \(ab\), below $c$ and not in $P_{ab}$,
	then $|\pathF{a,b}|\leq K|ab|$ for all $K\geq5.70$.
\end{lemma}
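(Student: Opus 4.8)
The plan is to reduce this case to an application of Lemma~\ref{t253} (the triangle $\Tr$ inequality), just as in Lemma~\ref{case3} and Lemma~\ref{case6}, exploiting the fact that $d \notin P_{ab}$ forces $d$ to lie outside a region within which the path $ac + \mathcal{P}(c,d) + db$ would be too long. Concretely, I would aim to prove inequality~\ref{C:ttres}, namely $|ac| + K|cd| + |db| \leq K|ab|$, which by the induction hypothesis yields $|\mathcal{P}(a,b)| \leq |ac| + |\mathcal{P}(c,d)| + |db| \leq K|ab|$. As in Lemma~\ref{case6}, I would first split $\Phi = |ac| + K|cd| + |db| - K|ab|$ along the horizontal projection $d'$ of $d$ onto $ab$ into $\Phi_1 = |db| - K|bd'|$ and $\Phi_2 = |ac| + K|cd| - K|ad'|$, so that $\Phi = \Phi_1 + \Phi_2$ since $d' \in ab$. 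The bound $\Phi_1 \leq 0$ for $K \geq 3.24$ follows exactly as before from $d_y(b,d) = d_y(b,d') \leq |bd'|$ and the fact that $d \in T_{ba}$ gives $\angle d'db \geq \pi/10$.

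The substance of the proof is therefore bounding $\Phi_2 = |ac| + K|cd| - K|ad'|$. Here I would first apply Transformation~\ref{transformt5} to assume $\alpha = \pi/10$: translating $a$ along $r'\ell'$ toward $\ell_m'$ leaves $|bd|$, $|cd|$, $|cb|$ fixed, and Lemma~\ref{posofc} tells us this only increases $|ac| - K|ab|$ for $K \geq 3.24$; one must check that $c$ and $d$ remain in the relevant regions under this translation (the left side of $T_{ab}$ moves left with $a$, so points inside stay inside, and $P_{ab}$ is fixed with respect to $T_{ba}$, so the hypothesis $d \notin P_{ab}$ is preserved). With $\alpha = \pi/10$, I would then identify a triangle congruent to $\Tc$ with apex at $a$: the cone $C_2^a$ has half-angle $\pi/5$ and, measured from $a$, the segment $ad'$ sits at angle at most $\pi/10$ from the relevant cone edge (since $\alpha = \pi/10$ and $d'$ is on $ab$). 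The key geometric claim is that because $d \notin P_{ab}$, the point $d$ lies far enough from $a$ (relative to $c$) that the configuration $(a, d', c)$ with an appropriate auxiliary triangle fits the hypotheses of Lemma~\ref{t334} — i.e. $d'$ plays the role of $t$, $c$ plays the role of $w$, and the third vertex is chosen so the triangle is $\Tc$ and contains $c$. Then Lemma~\ref{t334} gives $|ac| + K|cd'| \leq K|ad'|$ for $K \geq 5.70$; but I actually need $|cd|$, not $|cd'|$, so instead of projecting I would keep $d$ itself and argue that $(a, d, c)$ fits inside a copy of $\Tc$ whose edge through $d$ makes angle $\leq \pi/10$ with $ad$ — using $d \notin P_{ab}$ to locate $d$ relative to the far side of the pentagon — so that Lemma~\ref{t334} applied with $t = d$, $w = c$ directly yields $|ac| + K|cd| \leq K|ad'|$, provided $|ad| \leq |ad'|$, which holds because $\angle ad'd \geq \pi/2$ (as $d$ is below $c$ hence below the line through $a$ perpendicular to... — more carefully, because $d'$ is the horizontal projection and $d$ lies below $ab$).

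The main obstacle I anticipate is pinning down exactly which triangle to use for the $\Tc$ application and verifying that $c$ lies inside its relevant sub-triangle $\triangle stu$ while simultaneously the edge through $d$ respects the $\angle vst \leq \pi/10$ constraint — this is where the hypothesis "$d \notin P_{ab}$" must be used quantitatively, translating a statement about the pentagon's geometry (via the vertices $r'$, $r_m'$ from Definition~\ref{def:pentagon} and Claim~\ref{claim:pentagon}, which places $p_0$, $p_1$, $p_4$ on the boundary of $T_{ab}$) into a bound on $|ad|$ versus $|ac|$. I would lean on Claim~\ref{claim:pentagon} to locate the pentagon's boundary relative to $T_{ab}$, and on the case hypotheses ("$d$ left of $am$", "$d$ below $c$", so $ac$ and $bd$ cross) to constrain the angular position of $d$ as seen from $a$. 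If the direct $\Tc$ application proves awkward, the fallback is to split $\Phi_2$ further — bounding $|cd|$ via an intermediate projection onto $am$ as in Lemma~\ref{case6} and then invoking $\Tr$ (Lemma~\ref{t253}) on the resulting right-angled configuration — at the cost of checking that the $d \notin P_{ab}$ hypothesis still delivers the needed $\pi/10$-type angle bound. Either way, combining $\Phi_1 \leq 0$ and $\Phi_2 \leq 0$ gives $\Phi \leq 0$ for $K \geq 5.70$, completing the lemma.
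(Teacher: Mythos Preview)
Your approach diverges from the paper's in a way that creates real problems. The paper does \emph{not} use path~\ref{C:ttres} here; it uses path~\ref{B:dos}, proving $|bd| + K|da| \leq K|ab|$ directly by a single application of Lemma~\ref{t334} centered at $b$. The key geometric observation you are missing is this: since $d$ is below $c$ it must lie outside $T_{ab}$ (otherwise $d$, not $c$, would be $a$'s closest neighbour in $C_2^a$), and since $d \notin P_{ab}$ while the side $p_0p_4$ of $P_{ab}$ lies inside $T_{ab}$ (Claim~\ref{claim:pentagon}), $d$ is forced below the line $br_m'$. That places $d$ inside the triangle $\triangle b\ell'r_m'$, which is exactly a copy of $\Tc$ with $(s,v,u)=(b,\ell',r_m')$; taking $t=a$ (with $\angle \ell' b a = \alpha' \in [0,\pi/10]$) and $w=d$, Lemma~\ref{t334} gives $|bd|+K|da|\leq K|ab|$ for $K\geq 5.70$, and the induction hypothesis finishes the argument. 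No transformation, no splitting of $\Phi$, no involvement of $c$ at all.

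Your proposed argument, by contrast, has a concrete error and a structural gap. The error: you claim $|ad|\leq|ad'|$ because $\angle ad'd \geq \pi/2$, but that inequality on the angle makes $ad$ the side \emph{opposite} the largest angle in $\triangle add'$, hence $|ad|\geq|ad'|$, the wrong direction for your chain $|ac|+K|cd|\leq K|ad|\leq K|ad'|$. The gap: applying Lemma~\ref{t334} with $s=a$ and $t=d$ requires $d$ to lie on the far side $uv$ of a $\Tc$-triangle with apex $a$, but $d$ is outside $T_{ab}$ (below $\ell r$), so it does not sit on any such side, and you never specify $u,v$ or verify that $c\in\triangle stu$. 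The hypothesis $d\notin P_{ab}$ is precisely what lets the paper pin $d$ below $br_m'$ and run $\Tc$ from $b$; trying to run everything from $a$ forfeits that leverage.
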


\begin{proof}
	(Figures~\ref{examplet5s:sketch},~\ref{examplet59})
	We note that $ac$ and $bd$ intersect and $d$ must be outside of $T_{ab}$ (otherwise
	\(ad\) would be and edge of \(\Theta_5\), but not \(ac\)).
	We first show that $d$ is below $br_m'$.
	Recall that $P_{ab}$ is
	fixed with respect to $T_{ba}$. Since $d$ is outside of $T_{ab}$ and $P_{ab}$,
	if $p_4p_0$ is inside $T_{ab}$, $d$ must be below $br_m'$. Since the
	slope of $p_0p_4$ is less than the slope of $\ell a$, it is sufficient to
	show that $p_4$ is inside $T_{ab}$ which follows by
	Claim~\ref{claim:pentagon}.
	By Observation \ref{obs} we have that $0\leq \angle
	ab\ell' \leq \pi/10$. Thus we can map the  points $(\qs,\qt,\qw,\qu,\qv)$ to
	$(b,a,d,r_m',\ell')$ and apply
	Lemma~\ref{t334}. Thus $|bd|+K|da|\leq K|ab|$ for $K\geq 5.70$.
	Our induction hypothesis and Lemma~\ref{t334} imply that there
	is a path from $b$ to $a$ with length at most
	\begin{displaymath}
		|\pathF{a,b}| \leq |bd|+|\pathF{d,a}|\leq |bd| + K|da|\leq K|ab|.
	\end{displaymath}

\end{proof}

\begin{lemma}\label{case8}
	If $ac$ and $bd$ cross and both $c$ and $d$ are in $P_{ab}$,
	then $|\pathF{a,b}| \leq K|ab|$ for $K \geq 6.16$.
\end{lemma}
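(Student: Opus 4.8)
The plan is to bound the path \(ac + \pathF{c,d} + db\) using inequality~\ref{C:ttres}, i.e.\ to show \(\Phi = |ac| + K|cd| + |db| - K|ab| \le 0\) for \(K \ge 6.16\). Since we are in the regime where \(c, d \in P_{ab}\), the segments \(ac\) and \(bd\) cross, and \(d\) lies below \(c\) and outside \(T_{ab}\), I would first apply Transformation~\ref{transformt5} to reduce to the case \(\alpha = \pi/10\): by Lemma~\ref{posofc} this only increases \(|ac| - K|ab|\) (for \(K \ge 3.24\), which is implied by \(K \ge 6.16\)), and the transformation leaves \(|cd|\) and \(|bd|\) unchanged, so it suffices to prove the bound in this extreme configuration. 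In this configuration \(P_{ab}\) is in the position shown in Figure~\ref{p52}, with \(p_0, p_1\) on the horizontal line through \(b\) and \(p_4 \in T_{ab}\) by Claim~\ref{claim:pentagon}.

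Next I would split \(\Phi\) along a horizontal projection, as in the proofs of Lemmas~\ref{case5} and~\ref{case6}: let \(d'\) be the horizontal projection of \(d\) onto \(ab\) and write \(\Phi = \Phi_1 + \Phi_2\) with \(\Phi_1 = |db| - K|bd'|\) and \(\Phi_2 = |ac| + K|cd| - K|ad'|\), which is valid since \(d' \in ab\). For \(\Phi_1\): since \(d \in T_{ba}\) the angle \(\gamma = \angle d'db\) is at least \(\pi/10\), and \(d_y(b,d) = d_y(b,d') \le |bd'|\), so \(\Phi_1 \le |db|(1 - K\sin\gamma) \le 0\) already for \(K \ge 1/\sin(\pi/10) = 3.24\ldots\). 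The work is all in \(\Phi_2\): here the worst position of \(c\) and \(d\) is pushed to the boundary — \(c\) to the far side of \(P_{ab}\) (on \(p_0 p_4\) or \(p_1 p_4\)) and \(d\) as low as the pentagon allows — and one bounds \(|ac| + K|cd|\) by replacing \(c\) and \(d\) with extreme points so that the triangle \((a, d', \cdot)\) fits the template of Lemma~\ref{t334} (the \(\Tc\) triangle with \(\angle vst \le \pi/10\)), or else a direct trigonometric optimization over the position of \(c\) on the pentagon boundary and \(d\) below it. Solving the resulting inequality gives the threshold \(K \ge 6.16\).

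The main obstacle is controlling \(\Phi_2\): unlike the earlier cases, here the relevant "source" point \(a\) sees both \(c\) and \(d\), and \(d\) lies outside the canonical triangle \(T_{ab}\), so a single application of Lemma~\ref{t253} or~\ref{t334} does not immediately apply — one has to argue that the worst case occurs when \(c\) and \(d\) are simultaneously on the boundary of \(P_{ab}\), then reduce to one of the generic triangle lemmas (or redo the projection/sine-law computation from scratch) on that one-parameter family. This is exactly the step that forces \(K\) up to \(6.16\) rather than \(5.70\), and which Section~\ref{down2K} later revisits with a finer argument. I expect the bookkeeping — identifying which edge of the pentagon \(c\) lies on and pinning down the extreme position of \(d\) consistent with \(ac\) and \(bd\) crossing while \(d \notin T_{ab}\) — to be where the care is needed; the final estimate itself is a routine monotonicity-plus-law-of-sines calculation once the configuration is fixed.
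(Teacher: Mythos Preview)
Your setup is right: reduce to \(\alpha = \pi/10\) via Transformation~\ref{transformt5} and Lemma~\ref{posofc}, then show \(\Phi = |ac| + K|cd| + |db| - K|ab| \le 0\). But from there the paper's argument is far cruder and shorter than what you sketch. It simply bounds each of the three terms by its worst case over \(c,d \in P_{ab}\): since \(c\) and \(d\) both lie in the regular pentagon \(P_{ab}\), \(|cd|\) is at most a diagonal, namely \(2\sin(\pi/10)\,|ab|\); and \(|ac|\), \(|bd|\) are each at most \(\tfrac{\sin(2\pi/5)}{\sin(3\pi/10)}\,|ab|\) by the law of sines. Substituting these three uniform bounds into \(\Phi \le 0\) and solving for \(K\) gives
\[
K \ge \frac{2\sin(2\pi/5)}{\sin(3\pi/10)\bigl(1 - 2\sin(\pi/10)\bigr)} = 6.15\ldots
\]
No projection, no decomposition, no boundary optimization.

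Your horizontal split \(\Phi = \Phi_1 + \Phi_2\) has a real gap at \(\Phi_2\). In Lemma~\ref{case6} that split worked because \(d\) was \emph{above} \(c\): projecting \(d\) onto \(am\) produced a point \(d''\) with \(\angle cdd'' \ge 9\pi/10\), so \(|cd''| > |cd|\) and \(c\) sat inside a \(\Tr\) triangle with apex \(a\) and base point \(d''\). Here \(d\) is \emph{below} \(c\) and outside \(T_{ab}\); that obtuse-angle inequality fails, and there is no evident way to place \(c\) as the ``\(w\)'' point of a \(\Tr\) or \(\Tc\) triangle with apex \(a\) and target \(d'\). Your fallback of pushing \(c\) and \(d\) to the pentagon boundary and optimizing is precisely what Section~\ref{down2K} undertakes to reach \(5.70\), and it costs three lemmas there; you have not actually carried it out, and doing so would not give a \emph{simpler} proof of the weaker \(6.16\). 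The whole point of Lemma~\ref{case8} is to get a quick coarse bound before that refinement.
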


\begin{proof}
	(Figures~\ref{examplet5s:sketch},~\ref{acbdcross})
	We show $\Phi = |ac| + K|cd| + |db| - K|ab| \leq 0$, which implies
	$|\pathF{a,b}| \leq  |ac| + |\pathF{c,d}| + |db| \leq K |ab|$ by the triangle
	inequality and the induction hypothesis.
	Under Transformation~\ref{transformt5}, Lemma~\ref{posofc} implies that
	$\Phi$ is maximized when $\alpha = \pi/10$, so we assume this is the case.
	Since $c$, $d$, and $P_{ab}$ are fixed, $c$ and $d$ are still inside $P_{ab}$
	after Transformation~\ref{transformt5}.
	Given that $c$ and $d$ are in $P_{ab}$, the furthest apart $c$ and $d$ can be
	is if they are both on a diagonal of $P_{ab}$. The length of one
	side of $P_{ab}$ is at most
	$\frac{\sin(\pi/10)}{\sin(3\pi/10)}|ab|$. That means a diagonal of
	$P_{ab}$, and thus $|cd|$, has length at most
	$2\sin(3\pi/10)\frac{\sin(\pi/10)}{\sin(3\pi/10)}|ab| = 2\sin(\pi/10)|ab|$.
	At their longest, $|ac|$ and $|bd|$ each have length
	$\frac{\sin(2\pi/5)}{\sin(3\pi/10)}|ab|$ by the law of sines.
	We want
	\begin{displaymath}
		\Phi = |ac|+K|cd|+|db| -K|ab| \leq 0.
	\end{displaymath}
	Solving for $K$ gives
	\begin{displaymath}
	K
	\geq \frac{|ac|+|db|}{|ab|-|cd|}
	\geq \frac{2\cdot\sin(2\pi/5)}{\sin(3\pi/10)\cdot(1-2\cdot\sin(\pi/10))}
	= 6.15\ldots 
	\end{displaymath}
\end{proof}

%\section{Proving a spanning ratio of \texorpdfstring{$5.70$}{5.70} (missing proofs)}\label{down2K-details}

\section{Proof of Lemma~\ref{mainlemma1}}\label{mainlemma1-details}

In this section we show that $|ac|+|bd|+K|cd|-K|ab|\leq |ac'|+|bd'|+K|c'd'|-K|ab|$. See Figure~\ref{cdprime1}. Let $e$ be the intersection of $ac$ and $bd$, and let $e'$ be the intersection of $br'$ and $a\ell$. Observe that $\angle \ell e' r' =2\pi/5$, and thus we can see that $\angle dec\geq 2\pi/5$. This implies that $\angle dec$ cannot be the smallest angle in $\triangle dec$, since that would require $\angle dec\leq \pi/3$. Thus at least one of $\angle dce$ and $\angle edc$ is the smallest angle in $\triangle dec$. Since we have applied Transformation \ref{transformt5}, and can thus assume that $\alpha = \pi/10$, the cases are symmetric. We can therefore, without loss of generality, assume that $\angle dce$ is the smallest angle in $\triangle dec$.
\begin{figure}
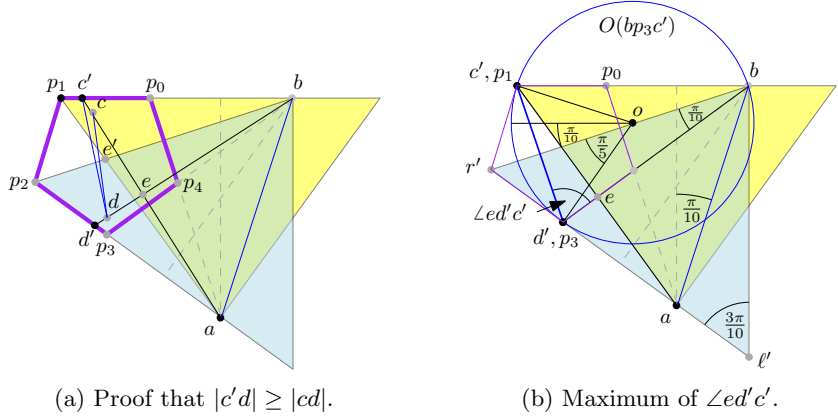

	\begin{subfigure}[b]{0.5\textwidth}
		\centering
		\includegraphics[page =25]{figures/generalcase.pdf}
		\caption{Proof that $|c'd|\geq |cd|$.}%
		\label{cdprime1}
	\end{subfigure}%
	\begin{subfigure}[b]{0.5\textwidth}
		\centering
		\includegraphics[page =31]{figures/generalcase.pdf}
		\caption{Maximum of $\angle ed'c'$.}%
		\label{cdprime2}
	\end{subfigure}
	\begin{comment}	\qquad
	\begin{subfigure}[b]{0.45\textwidth}
	\centering
	\includegraphics[page =27]{figures/generalcase.pdf}
	\caption{Maximum of $\angle ed'c'$.}%
	\label{cdprime2}
	\end{subfigure}\end{comment}
	\caption{Finding the longest distance from $a$ to $b$ when $c$ and $d$ are in $P_{ab}$,}
\end{figure}

\restate{\LemmaMainOne*}

\begin{proof}
	Since $c$ lies on $ac'$ and $d$ lies on $bd'$, we have $|ac|\leq |ac'|$ and $|bd|\leq |bd'|$, and it is sufficient to show that $|cd|\leq |c'd'|$. We first show that $|cd|\leq |c'd|$. Since $\angle dce$ is the smallest angle in $\triangle dec$, $\angle dce< \pi/3$. That implies that $\angle c'cd >\pi/2$, which implies that $c'd$ is the longest side of triangle $\triangle cc'd$, and thus $|cd|\leq |c'd|$. See Fig.\ref{cdprime1}.

	We now show that $|c'd'|\geq |c'd|$. If $\angle c'dd' \geq \pi/2$, then $c'd'$ is the longest side of $\triangle c'dd'$, and $|c'd'|\geq |c'd|$ and we are done. Otherwise assume $\angle c'dd'< \pi/2$.

	The law of sines tells us that $\frac{|c'd'|}{\sin \angle c'dd'} =
	\frac{|c'd|}{\sin \angle dd'c'}$. Since $\sin \theta$ is an increasing
	function for $0\leq \theta <\pi/2$, showing that $\angle c'dd' \geq \angle
	dd'c'$ is sufficient to show $|c'd'|\geq |c'd|$, as it would imply both
	angles are $<\pi/2$. Observe that $\angle c'dd'\geq \angle c'ed'$ and
	$\angle ed'c'=\angle dd'c'$, thus it is sufficient to prove that $\angle
	c'ed'\geq \angle ed'c'$.

	Observe that $\angle ced =\angle c'ed' \geq 2\pi/5$. We now find the
	maximum of $\angle dd'c' = \angle ed'c'\leq 2\pi/5$. Observe that if $c'$
	moves left, $\angle ed'c'$ increases, thus assume $c'$ is at $p_1$. Let
	$O(bp_3c')$ be the circle through $b$, $p_3$, and $c'$ with center $o$.
	Observe that $o$ lies on $br'$. Observe that $\angle r'bd' = \pi/10$, thus
	$\angle r'op_3 = \pi/5$. Segment $or'$ makes an angle of $\pi/10$ with the
	horizontal line through $o$. Thus $od'$ makes an angle of $3\pi/10$ with
	the horizontal line through $o$, and thus the line tangent to $O(bp_3c')$
	at $p_3$ is the line supporting $\ell'r'$, since $\ell'r'$ makes an angle
	of $3\pi/10$ with the vertical line through $\ell'$. See
	Figure~\ref{cdprime2}. That implies that $[p_2,p_3)$ lies outside of
	$O(bp_3c')$, which means for every point $d'$, $\angle ed'c' \leq \angle
	ep_3c' = 2\pi/5$, and thus $\angle c'dd' \geq \angle dd'c'$ as required.\qed
\end{proof}

\section{Proof of Lemma~\ref{mainlemma2}}\label{mainlemma2-details}

Observe that $|ap_1|+K|p_1p_3|+|p_3b| = |ap_0|+K|p_0p_2|+|p_2b|$ when $\alpha =
\pi/10$, since $T_{ab}$ and $T_{ba}$ are the same size and the cases are
symmetric.
We prove that
\begin{align*}
	\Phi' = |ac'|+K|c'd'|+|d'b|-K|ab| &\leq |ap_1|+K|p_1p_3|+|p_3b| -K|ab| = \Phi'' \\
							  &= |ap_0|+K|p_0p_2|+|p_2b|-K|ab|.
\end{align*}

\begin{figure}
	\begin{subfigure}[b]{0.45\textwidth}
		\centering
		\includegraphics[page = 23]{figures/generalcase.pdf}
		\caption{The point $q$ such that $|p_1q|=|c'd'|$ lies between $d'$ and $p_2$.}%
		\label{worstcase2}
	\end{subfigure}%
	\hfill
	\begin{subfigure}[b]{0.45\textwidth}
		\centering
		\includegraphics[page = 24]{figures/generalcase.pdf}
		\caption{We look at the change in $|d'p_3|+K|c'd'|$ with respect to $\theta$.}%
		\label{final}
	\end{subfigure}
	\caption{}
\end{figure}

\restate{\LemmaMainTwo*}

\begin{proof}

	(Figure~\ref{worstcase})
	Without loss of generality, we assume that $|p_1c'|\leq|p_2d'|$. We
	show that $\Phi'$ is maximized when $c'=p_1$ and $d'=p_3$.

	(Figure~\ref{worstcase2})
	Observe that $|p_1p_2|\leq |c'd'|\leq |p_1p_3|$.
	Let $z$ be a point on $p_2p_3$ that moves from $p_2$ to $p_3$, and observe
	that $|p_1z|$ takes on every value from $|p_1p_2|$ to $|p_2p_3|$. Thus
	there must be a point $q$ on $p_2p_3$ such that $|p_1q|=|c'd'|$.

	We claim that $|ap_1|+|bq| \geq |ac'|+|bd'|$, which
	implies that $\Phi' \leq |ap_1| + K|p_1q| + |qb| - K|ab|$.

	Observe $|ap_1|\geq |ac'|$, since $\angle p_1c'a >\pi/2$, making $ap_1$ the
	longest edge in triangle $\triangle ac'p_1$. We claim that $q$ is between
	$d'$ and $p_2$, and thus $|bq|\geq |bd'|$ since $\angle bd'q > \pi/2$.
	By contradiction, assume that $q$ is between $d'$ and
	$p_3$. Since $|p_1c'|\leq|p_2d'|$, $\angle qd'c' >\pi/2$, which implies
	that $|c'q|>|c'd'|$. Also note that $\angle qd'p_1 >\pi/2$, which implies
	$|p_1q|>|c'q|>|c'd'|$, a contradiction. Thus assuming that $c'=p_1$ and
	$d'=q$ does not decrease $\Phi'$.

	Now, given that $c'$ is on $p_1$, we show that $\Phi' \leq
	|ac'|+|bp_3|+K|c'p_3|$, that is, when $d'$ is on $p_3$. To do this we
	define another function $\Phi^* = |ac'|+|d'p_3|+|p_3b|+K|c'd'|-K|ab|$. See
	Figure~\ref{final}. Since $|bd'|\leq |d'p_3|+|p_3b|$ by the triangle
	inequality, $\Phi' \leq \Phi^*$, and observe that $\Phi'
	=\Phi^*=\Phi''$ when $d'=p_3$. We show that $\Phi^*$ is maximized
	when $d'=p_3$, thus implying that $\Phi'$ is also maximized when $d'=p_3$,
	and $\Phi'\leq \Phi''$. Let $\theta = \angle p_2p_1d'$. We allow $d'$ to
	move along $p_2p_3$ until $d'$ is on $p_3$, and fix all other points, and
	observe how $\Phi^*$ changes with $\theta$.

	We first rewrite $\Phi^*$ as $\Phi^*=
	|ac'|+|p_2p_3|-|p_2d'|+|p_3b|+K|c'd'|-K|ab|$. Using the sine law we get
	$|p_2d'|=\frac{\sin\theta}{\sin(2\pi/5-\theta)}|p_1p_2|$, and $|c'd'| =
	\frac{\sin(3\pi/5)}{\sin(2\pi/5-\theta)}|p_1p_2|$. All other terms of
	$\Phi^*$ have fixed values with respect to $\theta$. Thus

	\begin{align}
	\frac{d\Phi^*}{d\theta}
	&= \frac{d}{d\theta}\left( K\frac{\sin(3\pi/5)}{\sin(2\pi/5-\theta)}|p_1p_2| -\frac{\sin\theta}{\sin(2\pi/5-\theta)}|p_1p_2|\right)\nonumber\\
	&= \frac{K\cos(2\pi/5-\theta)\sin(3\pi/5)-\cos\theta\sin(2\pi/5-\theta) - \sin\theta\cos(2\pi/5-\theta)}{\sin^2(2\pi/5-\theta)}|p_1p_2|\nonumber\\
	&= \frac{K\cos(2\pi/5-\theta)\sin(3\pi/5)-\sin(2\pi/5)}{\sin^2(2\pi/5-\theta)}|p_1p_2|.\label{number}
	\end{align}

	Observe that $0\leq \theta \leq 3\pi/10$. The denominator of \eqref{number}
	is always positive. The numerator of \eqref{number} is minimized at $\theta
	= 0$, which for $K\geq5.70$ is positive. Thus \eqref{number} is always
	positive for $0\leq \theta\leq 3\pi/10$, thus $\Phi^*$ is increasing in
	$\theta$, and is maximized when $d'=p_3$, as required.
	Thus $\Phi' \leq \Phi^* \leq \Phi'' = |ap_1|+K|p_1p_3|+|p_3b|-K|ab|$ as required.\qed
\end{proof}

\section{Open Problems}\label{conclusion}

\ifisaac\else%
Using a few simple geometric observations and arguments, we have lowered the
spanning ratio of $\Theta_5$ from $9.96$ to $5.70$, bringing us closer to the
lower bound of $3.798$ and thus a tight bound.
\fi
The obvious open problem that remains is closing the gap between the upper and lower bound on the spanning ratio of the $\Theta_5$-graph.

\paragraph{\large{Acknowledgements:}} We thank Elena Arseneva for many fruitful discussions on the topic. 

\bibliographystyle{plain}
\bibliography{bibliography}

\begin{thebibliography}{10}

\bibitem{theta3connected}
O.~Aichholzer, S.~W. Bae, L.~Barba, P.~Bose, M.~Korman, A.~van Renssen,
  P.~Taslakian, and S.~Verdonschot.
\newblock {T}heta-3 is connected.
\newblock {\em Computational geometry}, 47(9):910--917, 2014.

\bibitem{barbat4}
L.~Barba, P.~Bose, J.~L. De~Carufel, A.~van Renssen, and S.~Verdonschot.
\newblock On the stretch factor of the {T}heta-4 graph.
\newblock In {\em Proceedings of the 13th International Symposium on Algorithms
  and Data Structures (WADS)}, pages 109--120, 2013.

\bibitem{soda19}
P.~Bose, J.~L.~De Carufel, D.~Hill, and M.~Smid.
\newblock On the spanning and routing ratio of {T}heta-four.
\newblock In {\em Proceedings of the Thirtieth Annual {ACM-SIAM} Symposium on
  Discrete Algorithms (SODA)}, pages 2361--2370, 2019.

\bibitem{morenotbetter}
P.~Bose, J.~L.~De Carufel, P.~Morin, A.~van Renssen, and S.~Verdonschot.
\newblock Towards tight bounds on {T}heta-graphs: More is not always better.
\newblock {\em Theoretical Computer Science}, 616:70--93, 2016.

\bibitem{sander5}
P.~Bose, P.~Morin, A.~van Renssen, and S.~Verdonschot.
\newblock The {T}heta-5-graph is a spanner.
\newblock {\em Computational Geometry}, 48(2):108--119, 2015.

\bibitem{chewtd}
L.~P. Chew.
\newblock There are planar graphs almost as good as the complete graph.
\newblock {\em Journal of Computer and System Sciences}, 39(2):205 -- 219,
  1989.

\bibitem{clarkson1987}
K.~Clarkson.
\newblock Approximation algorithms for shortest path motion planning.
\newblock In {\em Proceedings of the Nineteenth Annual ACM Symposium on Theory
  of Computing (STOC)}, pages 56--65, 1987.

\bibitem{keiltheta}
J.~M. Keil.
\newblock Approximating the complete {E}uclidean graph.
\newblock In {\em Proceedings of the 1st Scandinavian Workshop on Algorithm
  Theory (SWAT)}, pages 208--213, 1988.

\bibitem{keil1992}
J.~M. Keil and C.~A. Gutwin.
\newblock Classes of graphs which approximate the complete {E}uclidean graph.
\newblock {\em Discrete {\&} Computational Geometry}, 7(1):13--28, 1992.

\bibitem{nawarphd}
N.~M.~El Molla.
\newblock {\em Yao spanners for wireless ad-hoc networks}.
\newblock PhD thesis, Villanova University, 2009.

\bibitem{ruppert}
J.~Ruppert and R.~Seidel.
\newblock Approximating the d-dimensional complete {E}uclidean graph.
\newblock In {\em Proceedings of the 3rd Canadian Conference on Computational
  Geometry (CCCG)}, 1991.

\end{thebibliography}
%\appendix
%\input{text/details/closest-pair-is-edge}
%\input{text/details/general-triangles}
%\input{text/details/pentagon}
%\input{text/details/transformation}
%\input{text/details/proving-a-spanning-ratio-of-5.70}

\end{document}